\documentclass[a4paper,12pt]{article}
\usepackage{bm}
\usepackage{amsmath}
\usepackage{amssymb}
\usepackage[dvips]{graphicx}
\usepackage{pdflscape}
\usepackage{multirow}
\usepackage[sort]{natbib}
\usepackage{soul}
\usepackage{amsmath}
\usepackage{graphicx}
\usepackage{enumerate}
\usepackage{natbib}
\usepackage{url} 

\usepackage[utf8]{inputenc}
\usepackage{mathtools}
\usepackage{amsbsy}
\usepackage{amssymb}
\usepackage{color}
\usepackage{ulem}
\usepackage{placeins}
\usepackage{booktabs}
\usepackage{multirow}
\usepackage{hyperref}
 \usepackage{booktabs,subcaption,amsfonts,dcolumn}
\usepackage{xcolor,colortbl}
\usepackage{subcaption}
\usepackage{verbatim}
\usepackage{chngcntr}
\usepackage{apptools}
\usepackage{comment}
\usepackage{movie15}
\usepackage{epstopdf}
\AppendGraphicsExtensions{.gif}

\AtAppendix{\counterwithin{lem}{section}}
\AtAppendix{\counterwithin{cor}{section}}

\usepackage{amsthm}

\newtheorem{prop}{Proposition}

\newtheorem{rem}{Remark}

\definecolor{Gray}{gray}{0.85}
\definecolor{LightCyan}{rgb}{0.88,1,1}
 
\newcolumntype{b}{>{\columncolor{Gray}}c}
\newcolumntype{a}{>{\columncolor{red}}c}
\newcolumntype{g}{>{\columncolor{green}}c}


\addtolength{\oddsidemargin}{-.5in}%
\addtolength{\evensidemargin}{-.5in}%
\addtolength{\textwidth}{1in}%
\addtolength{\textheight}{-.3in}%
\addtolength{\topmargin}{-.8in}%

\oddsidemargin 0.15in \evensidemargin 0.1in \textwidth 6.15in \topmargin -0.60in
\headheight -.0in \textheight 9.210in
\parskip 0ex

\title{Movement of insurgent gangs: A Bayesian kernel density model for incomplete temporal data }
\author{ Karthik Sriram\footnote{Associate Professor, Indian Institute of Management  Ahmedabad, Gujarat, India (karthiks@iima.ac.in)}~~~Dhruv Gupta  ~~~Rajiv Parikh  }
\begin{document}
\date{\today} \maketitle
\begin{abstract}
We develop a Bayesian modeling framework to address a pressing real-life problem faced by the police in tackling insurgent gangs. Unlike criminals associated  with common crimes such as robbery, theft or street crime,  insurgent gangs  are trained in sophisticated arms and strategise against the government to weaken its resolve. They are constantly on the move, operating over large areas causing damage to national properties and terrorizing ordinary citizens. Different from the more commonly addressed problem of modeling crime-events, our context requires that an approach be formulated to model the movement of insurgent gangs, which is more valuable to the police forces  in preempting their activities and nabbing them. This paper evolved as a collaborative work with the Indian police to help augment their tactics with a systematic method, by integrating past data on observed gang-locations with the expert knowledge of the police officers. A methodological challenge in modeling the movement of insurgent gangs is that the data on their locations is incomplete, since they are observable only at some irregularly separated time-points. Based on a weighted kernel density formulation for temporal data, we analytically derive the closed form of the likelihood, conditional on incomplete past observed data. Building on the current tactics used by the police, we device an approach for constructing an expert-prior on gang-locations,  along with a sequential Bayesian procedure for estimation and prediction. We also propose a new metric for predictive assessment that complements another known metric used in similar problems.
\end{abstract}

\section{Introduction}
Managing insurgencies is a herculean task for local security forces in many countries. For example, India, Democratic Republic of Congo, Somalia and Yemen are tackling insurgent gangs for several years.  Unlike criminals associated  with common crimes such as robbery, theft or street crime,  insurgent gangs  deliberately strategise against the government to weaken its resolve. Therefore, the gangs train their cadre well in sophisticated firearms and often engage in gun-fights with the security forces to establish dominance. The gangs usually focus their operations in some particular regions of the country targeting different locations over time, causing damage to national properties such as roads, bridges, factories, police facilities and terrorising ordinary citizens. When not attacking, they are constantly on the move to avoid being caught or neutralized by the security forces. Therefore, it is  essential that the security forces anticipate the movements of the insurgent gangs in order to  proactively tackle them.  Typically, crime-modeling has focused on occurrence of crime-events, often overlooking the particular organized gang that committed the crime. Modeling crime-events may help the police prevent occurrence of crime in some hotspots, but may not necessarily lead to nabbing the criminals, which is their ultimate goal. In our context of insurgent gangs, trying to predict crime-events would be inadequate as well as difficult, given that such events are sparsely distributed over time.  Often challenged with limited resources, the security forces require novel methods of intervention to be effective. Hence, modeling the movement of insurgent gangs will be valuable to the police forces  in preempting criminal activities and nabbing them.

This paper evolved as a collaborative work with the Indian police, as  one of the authors is a senior police officer dealing with insurgent gangs. Currently,  the operations of the Indian police are largely driven by their experience. It is important to augment their tactics with a systematic method that can integrate past data on observed locations with the expert knowledge of the police officers, to improve the efficiency of using intelligence inputs and quality of patrols. Without a systematic method, most encounters with the gangs are only by chance.  Hence, our objective is to  develop a statistical modeling framework that can incorporate historical data on gang locations as well as the expert knowledge of police officers based on their experience and informant network, for predicting the movement of some insurgent  gangs in India. Since our modeling methodology builds on the tactics already used by the police forces during their counter-insurgency operations, it is amenable to easy adoption. Further, with the passage of time, the police forces will be able to gradually improve the  quality of information gathered, as well as the completeness of the data required for the model.

  As mentioned above, a different but related problem considered more often in the literature is that of modeling locations of crime-events. The difference is that gang-locations are not always accompanied by crime events. So,  the time-points other than those at which we have observed the gang-locations, are essentially time-points with missing data on locations.  The approaches to model crime-event locations include point-process models, e.g. log-Gaussian Cox process  (\citealt{shirota_gelfand2017}, \citealt{flaxmanetal2019}), self-exciting processes (e.g. \citealt{mohleretal2011}, \citealt{zhuang_mateu2019}),  auto-regressive mixture models (e.g. \citealt{taddy2010}).  Also, kernel density estimation (see  \citealt{silverman1986}) has been used extensively to develop  methods for generating hotspot maps to identify likely event locations (e.g. \citealt{brunsdon2007}, \citealt{Huetal2018}, \citealt{yuanetal2019}).  Relatively fewer works consider data on specific offenders (e.g. \citealt{porter_reich2012} ,  \citealt{liwang2018}) but the focus is to model crime events rather than offender movements. Some works do consider offender mobility (e.g. \citealt{brantingham_tita2008}, \citealt{rosesetal2020}), though not gang-movements, with the objective of generating hotspot maps for crime-events resulting as an aggregate of simulated behaviour of many offenders. Notably, a similar problem that has received attention is of animal-movements, through model formulations based on integrated Brownian motion (e.g. \citealt{hooten_johnson2017}), correlated random walk formulations (e.g. \citealt{wang2019}, \citealt{buttsetal2022}), and non-homogenous spatial Poisson processes (e.g. \citealt{hanksetal2015}). However, random walk formulations typically assume independent increments, which is not realistic in the context of gang movements, where the gangs show long term path dependence. Stochastic formulations such as non-homogeneous Poisson process, where the rate function  can potentially depend on all past locations, are more amenable to data that are frequently recorded at regular time-points. They become mathematically intractable if data is incomplete, without being observed at all the desired time-points. Unarguably, regular and frequent recording of locations may be possible in the context of animal-movements, for example by using tracking collars, but certainly not possible with insurgent gangs. To cater to these issues, we formulate a nonparametric model using a kernel density approach that adequately accounts for incompleteness of the temporal data  observed at irregularly separated time points.
   
Further, our context requires a systematic approach to incorporate expert knowledge of police officers involved in anti-insurgent operations because locations of  insurgent gangs are observable only at irregularly separated time-points and spread over large areas. To illustrate this,  Figure \ref{F:fig2}  (i) shows the first 20 instances (in our data) of the observed locations for a particular gang (which we refer to as gang A) over a span of 500 days. The locations of the gang are  usually in the forests and away from the police camps. Figure \ref{F:fig2}  (ii) shows the gaps in days between the successively observed instances of gang A. It can be seen that gaps between observed instances can easily be more than 20 days apart and sometimes more than 50 days. The area of movements of gang A in our data happens to be over 5500 sq km but the gang could continue expanding its operations covering a larger area, based on police action or in search of resources to sustain. Therefore, a logical methodology that can incorporate any special features to help predict the location of insurgent gangs becomes relevant. In our approach, we complement the observed data on gang-locations with expert knowledge of police officers, which is a combination of their knowledge of the gang's recent past locations, some insights about the gang's preferences (e.g.  high density forest and non-proximity to police camps), and  intelligence inputs through informant networks. 
 \begin{figure}[ht]
\center
\caption{\label{F:fig2} For the first 20 observed instances of insurgent gang A in our data, part (i)  shows the successive observed locations, and part (ii) shows the gaps in days between successively observed instances. The gang operates in the Jharkhand state of India, and its span of area in these 20 instances is approximately 5500 sq km.  The grey shaded parts indicate forest areas and triangle-symbols indicate police camps.  Due to sensitivity of the information, the longitudes and latitude values are omitted on the axes. }
\begin{subfigure}{.48\textwidth}
\caption{successive observed locations }
\includegraphics[width=6cm, height=6cm,angle=0]{./figs/gang_movement_gangA}
\end{subfigure}
\begin{subfigure}{.48\textwidth}
\caption{days between observed instances }
\includegraphics[width=6cm, height=6cm,angle=270]{./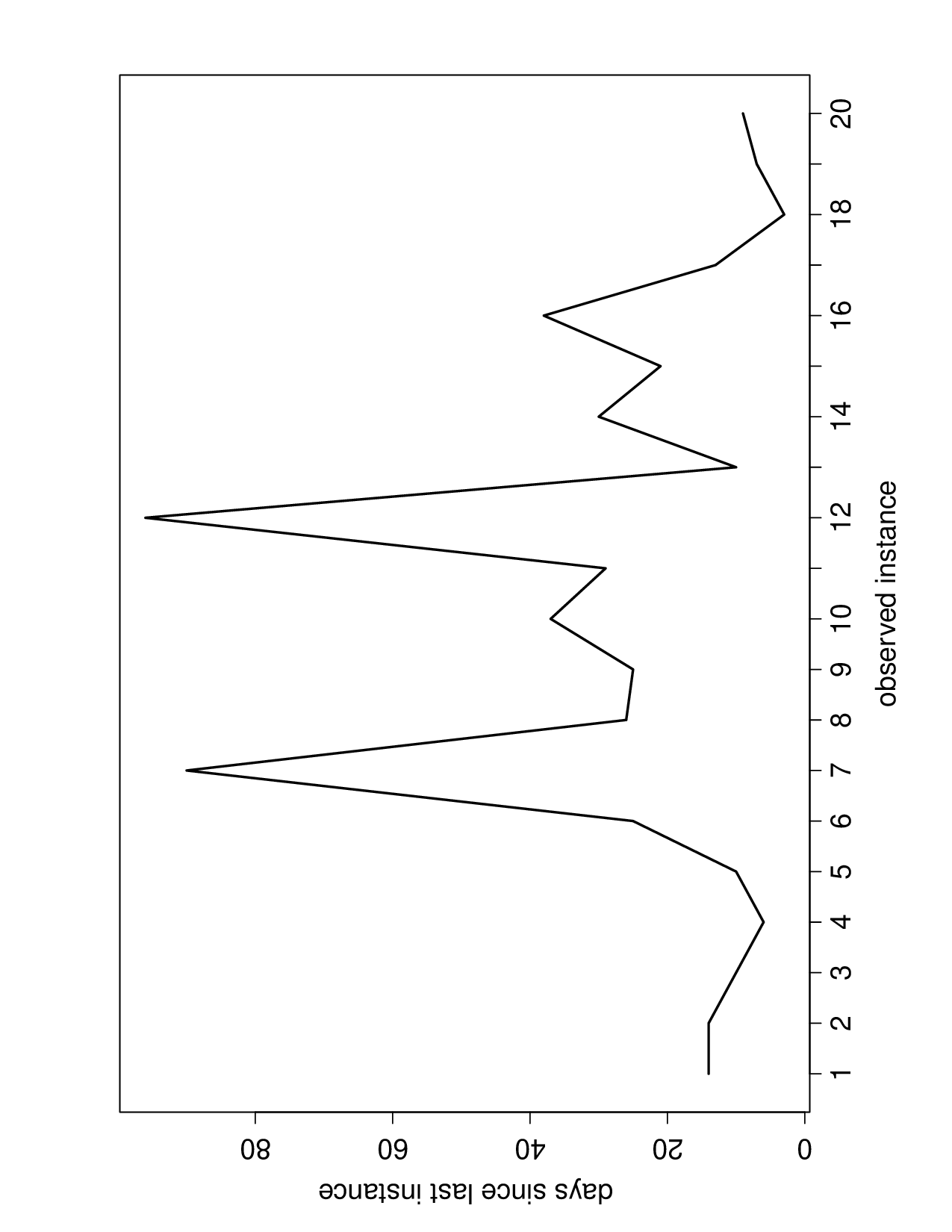}
\end{subfigure}
\end{figure} 

 ~\\
In our paper, we develop a Bayesian modeling framework to predict the locations of insurgent gangs, based on historical data and expert-knowledge of the police officers. In particular, our contributions include the following:
 ~\\
 (i) {\bf A weighted kernel density model for temporal data (on gang-locations) observed at discrete time-points (i.e. days).}  The idea is to  update the model every day and be able to predict the gang-location for the next day, conditioned on the observed locations till that day.  However, a methodological challenge is that data on locations is incomplete, i.e. they are observed only on some irregularly separated days.  As a first step, in Proposition \ref{thm:result1},   we analytically derive the closed form of the model likelihood when conditioned on the incomplete past data.

 ~\\
 (ii) {\bf  A systematic approach to dynamically construct and incorporate a prior based on expert knowledge of the police officers.} A methodological challenge is that on any given day, the expert-prior is stated on the ``observables", i.e.  as  a spatial probability density for possible gang-locations, and not directly as a probability distribution on the model parameters.  Further, on any given day, not only do we have a prior on the model parameters obtained through Bayesian updates on past days, but also the expert-prior on gang-locations that is specifically constructed for the current day.   We device a dynamic Bayesian approach for estimation and predictions while addressing these challenges. 
 
 ~\\
 (iii) {\bf A new metric for predictive assessments.}  A natural metric that can be considered in our problem is the ``Required Area to Monitor" (RAM), an idea used by \citealt{porter_reich2012}, which is the area to be monitored in decreasing order of predicted model likelihood till the actual gang-location. To complement RAM, we propose the idea of area under the `Proximity Curve', which plots percentage of area to monitor versus the minimum distance to the actual gang-location, of locations within that area.
 
 ~\\
In what follows, Section \ref{S:background} describes the data preparation.   Section \ref{S:model_expert} gives the formulation of our main model and the construction of the expert-prior. Section \ref{S:est_pred} gives our sequential Bayesian approach to estimation and prediction along with a description of some important metrics for predictive assessment. Section \ref{S:results}   discusses the results of fitting the model to data on different insurgent gangs in India. We  conclude with a discussion in Section \ref{S:conclusion}. Details on some modeling procedures,  proof of the main mathematical result and some simulation results are deferred to the Appendix, in order to not obstruct the flow of the paper.
\section{Data preparation}
\label{S:background}
For our work, we use data on some major insurgent gangs that operate in the state of Jharkhand in India. They are referred to as ``Naxalite gangs" in India. Naxalite insurgency is a serious armed conflict, involving radical communists organised as multiple insurgent gangs. It began in the year 1967, and derives its name from ``Naxalbari", a village where farmers had protested for land rights. However,  over time, it gradually evolved into an armed conflict where several criminal gangs emerged, and continue to engage in violence leading to several fatalities of police and civilians, as well as destruction of national property. \cite{dhruv_karthik2018} give a more detailed background of the naxalite problem in India.   The Indian government has deployed special federal security forces called the Central Reserve Police Force (CRPF) in aid of state police forces to help tackle this insurgency.   The security forces regularly patrol areas affected by naxalite activity to help reduce the casualties and destruction to national assets. Our modeling approach will use data on historical locations of different gangs and also information relating to the expert knowledge of the police,  towards constructing a Bayesian prior base on variables such as forest density, proximity to CRPF camps and intelligence inputs through the informant network. The variables we use for the construction of the prior are only an indicative list but can serve as a good starting point towards designing better data gathering systems in the future.  We will give the details on the approach for the model formulation and constructing the expert prior in  Section \ref{S:model_expert}. Below, we elaborate on the related data preparation exercise.

~\\
 A. {\bf Past observed locations of the gang and corresponding dates:} For the different insurgent gangs we obtain the past observed geographical coordinates (longitude, latitute) along with the dates, based on  police records for the period Jan 2019 to June 2022. Typically,  location of a gang is known when there is a chance encounter with the police, or when the gang causes some notable disturbance such as killings, arson, bombing or extortions. During investigations, subsequent to the incident, the police determine and record the particular gang that was involved. 
 
~\\
B. {\bf Locations of CRPF camps:}  Generally, insurgent gangs  prefer to move in areas that are not close to the CRPF camps. Several CRPF camps are located across the state of Jharkhand, and  some of them are illustrated using triangle-symbols in Figure \ref{F:fig2}(i).  We will use the CRPF camp locations obtained from the police department, to identify locations in close proximity to the CRPF camps, while constructing the expert prior.

~\\
C. {\bf Deriving forest density at locations, using satellite images:}  An input to the expert prior is the forest density at any given location.  For practical purposes, we consider a fine grid of points covering the Jharkhand state and classify each point in the grid as forest or non-forest. Then, we can approximately compute the forest density at any location as the percentage of grid points within a suitable radius that are identified as forest. However, whether any given location is within the forest is itself not readily known, and requires a separate ``landuse modeling" exercise, which we carry out as part of this data preparation exercise. For the landuse modeling, we first obtain `landuse satellite' (landsat) images covering the state of Jharkhand, India from the U.S. Geological Survey ( \href{https://earthexplorer.usgs.gov}{USGS}). We then develop a classification model for each landsat image to classify different locations within the image as either forest or non-forest, given the  latitude and longitude using the `superclass' package in R (\citealt{cran}).  The detailed approach to landuse modeling is described in  Appendix-Section \ref{S:landuse_model}. 

~\\
D. {\bf Informant Intelligence}:  Through their informer networks, the police forces are on a constant lookout for `tell-tale' signs which indicate possible presence of naxalite gangs in an area. Such information is usually recorded in a police diary in Hindi or English or vernacular languages. Table \ref{T:intel} gives a couple of examples of how such intelligence information is recorded by the police (after translating to English and masking the identities of the gangs and locations due to the sensitive nature of the information). Whenever such information was available, we manually looked up the suggested locations (as close as possible) and also noted down the date when such intelligence was made available. 
\begin{table}[htbp]
  \centering
  \caption{\label{T:intel} Illustration of informant intelligence information recorded in the police diaries after masking the names of gangs and locations }
    \begin{tabular}{|l|p{10cm}|}
    \hline
     example  & intelligence description \\
     \hline
 example 1 & An armed group (12-15) of *** group led by *** has been noticed camping in the forest area located in between villages ***  and ***. The location of the group is south of the  school ****. \\
\hline
example 2 & A group (15-20) of *** group led by *** is camping in the forest between *** village and *** village, 1.5 km north of *** hills,  bordering police station *** near the streamlet *** \\
\hline
    \end{tabular}%
\end{table}%
\FloatBarrier

\section{Formulation of the model and the expert-prior}
\label{S:model_expert}
In this section, we  formulate a weighted kernel density model for the gang-location on a given day, conditional on incomplete past data on gang-locations, i.e. when the past gang-locations are observed only on irregularly separated days. We also propose an approach to construct an expert-prior map for the possible location of a gang on any given day, based on expert knowledge of the police.  Estimation and prediction with these formulations will be discussed in the Section \ref{S:est_pred}.  We denote the locations on days $1$ to $n$ by $s_{1:n}$, the likelihood of location on day $n$ conditional on all the past locations by $f_\Theta\left(s_{n}\vert s_{1:(n-1)}\right)$,  and  the expert-prior map for the gang-location on day $n$, by $E_n(s_n)$. Understandably, the locations, likelihood and expert-prior map are gang-specific.  So, for a gang $g$, this could be explicitly shown, for example, by writing  $s^{(g)}_{1:n}$,  $f^{(g)}_{\Theta}$ and $E^{(n,g)}$, respectively. However, for notational simplicity, we suppress their dependence on the gang $g$, with the understanding that on any given day, we will carry out the sequential Bayesian update for the model parameters  by considering the observed location of one gang at a time, along with the likelihood, past data and expert-prior for that particular gang.

\subsection{Model}
\label{S:model}
We consider discrete time points in ``days" and denote the locations of a given gang on successive days, numbered $\{1, 2, \ldots, n, \ldots\}$ by $\{s_1, s_2, \ldots, s_n, \ldots \}$, where each $s_j=(s_{jx}, s_{jy})$ gives the (longitude, latitude) coordinates of the gang-location on the $j^{th}$ day. The gang locations in our context are to be tracked in the state of Jharkhand covering an area of about 80,000 sq km and the security forces may only get to know their locations in gaps of several days. Given the large area involved and the uncertainty involved in knowing any gang's location in real-time, it is practically reasonable to work with one representative location of the gang's movement on any given day.  So, the location on any given day in our data is to be interpreted as an approximate location representing the gang's position on that day.  A reasonable model formulation should consider the fact that the future movement locations are likely to depend on many of the previously visited locations. Accordingly, we may denote the probability density model for the gang-location on the $(n+1)^{th}$ day conditional on the locations of the past $n$ days by 
\begin{equation}
f_\Theta\left(s_{n+1}\vert s_{1:n}\right), ~n \geq 1,\label{eq:lik1}
\end{equation}
where $\Theta$ are the model parameters and $s_{1:n}=(s_1, s_2, \ldots, s_n)$. However, it is important to note that the locations of the gang may not be observed on several days. We denote the index of the days when the location is not observed by $u_{1:L}=\{u_1, u_2,\ldots,  u_L\}$, where $1< u_1<u_2<\cdots u_L \leq n$. Equivalently, the gang-locations are only observed on some irregularly separated days, viz.  $(1:n)\backslash u_{1:L}$. So, we can denote the observed data as of the $n^{th}$ day by 
\begin{eqnarray}
\mathcal{D}_n&&= s_{(1:n)\backslash u_{(1:L)}}\nonumber\\
&&= \left\{s_1,\ldots, s_{u_1-1},\square , s_{u_1+1},\ldots, s_{u_2-1}, \square , s_{u_2+1}\ldots, s_{u_L-1}, \square,s_{u_L+1}, \ldots, s_n \right\}, \label{eq:D_n}
\end{eqnarray}
where the squares ($\square$) indicate the missing time points. We note that equation (\ref{eq:D_n}) is a general notation and does not preclude the possibility that data may not be observed for a continuous run of several days.  It is clear that the value of $L$ has to be less than $n$ but we suppress this dependence to avoid cumbersome expressions.  In order to model  data that are observed at irregular time points, we need to derive the  likelihood for the location on  $(n+1)^{th}$ day, conditional only on the observed data, starting from the model specification (\ref{eq:lik1}), as follows: 
\begin{eqnarray}
&&f_\Theta(s_{n+1}\vert \mathcal{D}_n)= f_\Theta\left(s_{n+1}\vert s_{(1:n)\backslash u_{(1:L)}}\right) \nonumber\\
 &&=\int\ldots\int f_\Theta\left(s_{n+1}\vert s_{(1:n)}\right)
\cdot f_\Theta\left(s_{u_1}\vert s_{(1:(u_1-1))}\right)\cdots f_\Theta\left(s_{u_L}\vert s_{(1:(u_L-1))\backslash u_{1:(L-1)}}\right)\nonumber\\
&~&\hspace{11cm}{ds_{u_1}\ldots ds_{u_L}.}\label{eq:lik2}
\end{eqnarray}
In general, deriving the likelihood (\ref{eq:lik2}) starting from (\ref{eq:lik1}) is non-trivial. Of course, it could be obtained under some simplifying  assumptions such as Markovian or multivariate Gaussian, but those could be unreasonably restrictive in practice. A Markovian assumption restricts the likelihood of future locations to depend only on the most recently observed location, and a Gaussian model imposes  a unimodal likelihood structure. Instead, we propose a non-parametric weighted kernel density approach, and analytically derive the solution to equation (\ref{eq:lik2}). The non-parametric nature of the kernel density provides the flexibility to capture dependence on all historical data, along with the possibly multi-modal spatial nature of the probability density that we are trying to predict, while being amenable to a closed-form expression for the conditional likelihood (\ref{eq:lik2}).

Our model formulation for the probability density for location $s_{n+1}$ on day $(n+1)$, conditional on locations $s_1, s_2, \ldots, s_n$ on all past $n$ days (i.e. if they were all observed), is given by the following weighted kernel density formulation:
\begin{eqnarray}
f_\Theta\left(s_{n+1} \vert s_{1:n}\right)=\sum_{i=1}^n w_{\theta}(i, n+1) \kappa_2(s-s_i; \underline{h}), n\geq 1, \label{eq:fullmod}
\end{eqnarray}
where  $\Theta=(\theta, \underline{h})$ is the parameter vector taking values in a suitable parameter set $\mathcal{S}_\Theta$, $\kappa_2(z, \underline{h})$ is the bivariate standard Gaussian kernel:
 \begin{eqnarray}
  \kappa_2(z, \underline{h})= \frac{1}{\sqrt{2\pi}h_1} e^{-\frac{z_1^2}{2 h_1^2}}\cdot  \frac{1}{\sqrt{2\pi}h_2} e^{-\frac{z_2^2}{2 h_2^2}}, \mbox{ for } z=(z_1,z_2),\underline{h}=(h_1, h_2), \label{eq:bivnormal}
  \end{eqnarray}
and $w_\theta(i, n+1)$ is a time-weighting function parametrised by $\theta$, such that 
\begin{eqnarray}
\mbox{ for any } n,~  \sum_{j=1}^n w_\theta(j, n+1)=1. \label{eq:weight_cond} 
\end{eqnarray}

Bivariate kernel density formulations have been used for spatial modeling  of crime (e.g.  \citealt{brunsdon2007}, \citealt{yuanetal2019}).  Our formulation (\ref{eq:fullmod}) is a  time-weighted bivariate kernel density.  Weighted terms in kernel density formulations have been used in other problems, e.g. \citet{harvey_orysh2012}  for time-series data,   \cite{Gao_Zhong2010}  use random uniformly distributed weights,  \cite{wang_wang2007} consider deterministic weights  based on a covariates, and  \citet{takde2022} use temporally block-weighted kernel density for  streaming data. \citet{porter_reich2012}  use a time-weighted kernel density formulation and we will note its connection to our work in Remark \ref{rem1}. 

Equation (\ref{eq:fullmod}) leads to a full stochastic specification because if we knew $f_\Theta(s_1)$ then the joint pdf of $(s_1,s_2,\ldots, s_{n+1})$ for every $n$  can then be written as
\begin{equation}
f_\Theta(s_1)\prod_{k=2}^{n}f_\Theta\left(s_{k+1} \vert s_{1:(k-1)}\right). \label{eq:jointpdf}
\end{equation}
 In the interest of parsimony, we simplify the bandwidth parameters as
 \begin{eqnarray}
 \underline{h}=(h_1,h_2)=h \cdot (\delta_{lon} , \delta_{lat}), ~h>0, \label{eq:h1h2}
 \end{eqnarray}
  where the values of $(\delta_{lon}, \delta_{lat})$ are chosen to correspond to a 1 km distance along the respective axes, and $h$ is a free scalar-multiple that will be estimated. The time-weighting function can be specified in terms of a parametric univariate pdf $g_\theta(t)$, $t\in [0, \infty)$, as
      \begin{eqnarray}
 w_\theta(i, n+1)=\frac{g_{\theta}(n+1-i)}{\sum_{j=1}^{n}g_{\theta}(n+1-j)}. \label{eq:weight}
 \end{eqnarray}
 The nature of $g_\theta$ chosen would depend on the specific application. For example, if $g_\theta(t)$ is a unimodal pdf on $t\in [0, \infty)$ with mode at $t=0$ and $\theta$ being a parameter relating to its spread, then it would result in a time-decaying weight function. The model requires the estimation of the following parameters
  \begin{eqnarray}
  \Theta= (\theta, h), \theta\geq 0,~h>0. \label{eq:theta1}
  \end{eqnarray} 
For our work, we take
 \begin{eqnarray}
  g_{\theta}(t)= e^{-t/\theta} \implies  w_\theta(i, n+1)=\frac{e^{-(n-i)/\theta} \left(1-e^{1/\theta}\right)}{1-e^{-n/ \theta}}, ~ \mbox{ for } i\leq n. \label{eq:exponential}
  \end{eqnarray}
In (\ref{eq:exponential}), the parameter $\theta$ controls the rate of decay of the weights as the time difference increases. Very large values of $\theta$ will correspond to approximately uniform weighting and small values will imply that only the very recent past events will get weighted.    

 As in equation (\ref{eq:D_n}), if the data until day $n$ is only observed on days $(1:n)\backslash u_{1:L}$, then we need to solve the problem of obtaining the likelihood
 \begin{eqnarray}
 f_\Theta\left( s_{n+1}\vert \mathcal{D}_n\right)= f_{(\theta,h)}\left(s_{n+1}=s\vert s_{(1:n)\backslash u_{(1:L)}}\right) \label{eq:proposed_model},
 \end{eqnarray}
  which can be obtained using (\ref{eq:lik2}) starting from (\ref{eq:fullmod}). The following proposition gives the solution to this problem.
 \begin{prop}
 \label{thm:result1}
Consider the weight function $w_\theta(\cdot, \cdot)$ as in (\ref{eq:weight}), $\underline{h}$ as in equation (\ref{eq:h1h2}). Among data points $1:n$, say $u_1, \ldots, u_L$ are missing and let
 \begin{eqnarray}
&& \mathcal{S}_1=\{\},~\mathcal{S}_2=\{u_1\}, \ldots , \mathcal{S}_L=\{u_1,\ldots, u_{L-1}\}, ~\mathcal{S}_{L+1}=\{u_1,\ldots, u_{L}\}.\label{eq:S}
 \end{eqnarray}
 Then, the likelihood (\ref{eq:lik2}) is given by  
 \begin{eqnarray}
 &&~\begin{split}
&  f_{\Theta}\left(s_{n+1}=s\vert \mathcal{D}_n\right)=f_{\Theta}\left(s_{n+1}=s\vert s_{(1:n)\backslash \mathcal{S}_{L+1}}\right)\nonumber \\
&= \sum_{i\in(1:n)\backslash\mathcal{S}_{L+1}} w_\theta(i,n+1) \kappa_2(s-s_i,\underline{h}) \\
&+\sum_{m=1}^L \left( \sum_{1\leq i_1 < i_2\cdots<i_m\leq L} w_\theta(u_{i_1},u_{i_2}) \cdot w_\theta(u_{i_2},u_{i_3}) \cdots \right. \\ 
& ~~~~\cdots w_\theta(u_{i_{m-1}},u_{i_m}) \cdot w_\theta(u_{i_m},n+1) \cdot \left.  \sum_{j\in(1:u_{i_1}-1)\backslash\mathcal{S}_{i_1}} w_\theta(j,u_{i_1}) \kappa_2(s-s_j,\sqrt{m+1} \cdot \underline{h})\right).
 \end{split}\nonumber\\
 &&~ \label{eq:result1}
 \end{eqnarray}
 \end{prop}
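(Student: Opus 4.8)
The plan is to start from the marginalization identity (\ref{eq:lik2}) and substitute the kernel expansion (\ref{eq:fullmod}) into $f_\Theta(s_{n+1}\vert s_{1:n})$ as well as into each missing-data factor $f_\Theta(s_{u_\ell}\vert\cdot)$, so that the integrand becomes a product of $L+1$ finite sums of weighted bivariate Gaussian kernels. The single analytic tool required is the closure of Gaussians under convolution: because $\kappa_2$ factorizes across the two coordinates, the univariate identity yields, for any $p\ge 0$,
\[
\int_{\mathbb{R}^2}\kappa_2(a-z;\sqrt{p}\,\underline{h})\,\kappa_2(z-b;\underline{h})\,dz=\kappa_2(a-b;\sqrt{p+1}\,\underline{h}).
\]
Thus each time a missing location is integrated out, a center is ``passed through'' and the kernel's bandwidth inflates from $\sqrt{p}\,\underline{h}$ to $\sqrt{p+1}\,\underline{h}$, which is the source of the $\sqrt{m+1}\,\underline{h}$ scaling in (\ref{eq:result1}).

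First I would expand the product of sums. A generic term selects one center index for $s_{n+1}$ and one for each $s_{u_\ell}$; since every selected center is strictly earlier in time than the point it belongs to, this assignment defines a family of ``pointers'' on the nodes $\{n+1,u_1,\dots,u_L\}$ that strictly decrease the day index and therefore contain no cycle. Tracing the pointers starting from $n+1$ produces a chain $n+1\to u_{i_m}\to\cdots\to u_{i_1}\to j$ terminating at an observed index $j$. I would then integrate the missing variables out and argue that a missing node which is \emph{not} on this chain contributes a factor of exactly $1$: its variable appears only in its own density factor, whose integral is $1$ by the weight normalization (\ref{eq:weight_cond}), so summing over its center choice collapses it. A missing node which \emph{is} on the chain is integrated against the kernel of the next node down the chain, and by the convolution identity these integrations telescope, multiplying the running weight by a factor $w_\theta$ along each edge and inflating the bandwidth to $\sqrt{m+1}\,\underline{h}$ after the $m$ edges through missing points.

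To finish, I would group the surviving terms by the number $m$ of missing points traversed and by the increasing index set $i_1<\cdots<i_m$ of those points: the accumulated weight is $w_\theta(j,u_{i_1})w_\theta(u_{i_1},u_{i_2})\cdots w_\theta(u_{i_m},n+1)$, the anchor $j$ ranges over observed indices below $u_{i_1}$, i.e. $(1{:}u_{i_1}-1)\backslash\mathcal{S}_{i_1}$, and the $m=0$ chains reassemble into the direct term $\sum_{i\in(1{:}n)\backslash\mathcal{S}_{L+1}}w_\theta(i,n+1)\kappa_2(s-s_i;\underline{h})$; this is exactly (\ref{eq:result1}). Rather than argue the chain classification informally, a cleaner rigorous route is induction on $L$, integrating out the \emph{last} missing day $s_{u_L}$ first, since it occurs in only two factors: the terms whose $s_{n+1}$-center differs from $u_L$ reduce, via (\ref{eq:weight_cond}), to the $(L-1)$-missing instance, while the single term whose center equals $u_L$ yields, after one convolution, a structurally identical problem with target day $u_L$ and base bandwidth $\sqrt{2}\,\underline{h}$. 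The main obstacle I anticipate is purely organizational: I must strengthen the inductive statement to carry a general target index and a general base bandwidth $\sqrt{c}\,\underline{h}$ so that the recursion closes, and simultaneously track the nested weight products and the set-exclusions $\mathcal{S}_{i_1}$ so the index ranges come out exactly right. The convolution step itself is routine, but verifying that every off-chain configuration contributes precisely $1$, so that nothing spurious survives, is where the accounting must be done with care.
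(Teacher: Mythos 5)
Your rigorous route is essentially the paper's own proof: the paper also argues by induction on $L$, peels off the last missing day $u_{J+1}$, and uses exactly your Gaussian convolution identity (its equation (\ref{eq:lem1})) to produce the bandwidth inflation from $\sqrt{m+1}\,\underline{h}$ to $\sqrt{m+2}\,\underline{h}$, after which chains through $u_{J+1}$ and chains avoiding it recombine just as in your account. The one organizational difference is that the paper never needs your strengthened inductive statement carrying a general base bandwidth $\sqrt{c}\,\underline{h}$: instead of convolving first and recursing on a modified problem, it first expands the final factor $f_\Theta\left(s_{u_{J+1}}\vert s_{(1:u_{J+1}-1)\backslash u_{(1:J)}}\right)$ by the inductive hypothesis (which already covers an arbitrary target day, since $n$ and the missing set are arbitrary in the proposition) and only then convolves $\kappa_2(\cdot,\underline{h})$ against each resulting term; you may find that ordering closes the recursion with less bookkeeping. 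Two cautions on your direct ``pointer/chain'' expansion, should you try to make that route rigorous instead: first, in (\ref{eq:lik2}) the factors $f_\Theta\left(s_{u_\ell}\vert s_{(1:(u_\ell-1))\backslash u_{1:(\ell-1)}}\right)$ for $\ell\geq 2$ are conditionals given \emph{incomplete} data---precisely the objects the proposition is computing---so substituting the complete-data expansion (\ref{eq:fullmod}) into them is not licensed as written and threatens circularity unless it is justified by the very recursion you end up deferring to; second, your claim that an off-chain missing node's variable ``appears only in its own density factor'' can fail, because another off-chain node may select it as its center, so the collapse-to-one argument needs the off-chain variables to be integrated in decreasing index order (leaves of the pointer forest first). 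That last point is exactly the accounting issue you flagged at the end, and it is where the informal sketch would require real work; the inductive argument you (and the paper) adopt avoids it entirely.
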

 Proof of Proposition \ref{thm:result1} uses mathematical induction and is given in  Appendix-Section \ref{S:app_proof}.  Towards obtaining a computationally more convenient form of  (\ref{eq:result1}), let $I_{\{condition\}}$ denote the indicator function which takes the value $1$ or $0$ depending on whether the $\{condition\}$ is satisfied or not.  Then, we define the following matrices:
\begin{eqnarray}
  A_\theta&:=& \left(\left(A_\theta(p,q)  \right)\right)_{p=1:n, q=1:L} , \mbox{ where }A_\theta(p, q) =I_{\left\{ p\in(1:u_{q}-1)\backslash\mathcal{S}_{q}\right\}} \cdot w_\theta(p, u_{q}),\\
  W_\theta&:=& \left(\left(W_\theta(p,q)\right)\right)_{p=1:L, q=1:L} ,\mbox{ where }W_\theta(p,q) =I_{\{ p<q\}} \cdot w_\theta(u_{i_p},u_{i_{q}}),\\
B_\theta &:=& \left(\left(B_\theta(p)\right)\right)_{p=1:L} ,\mbox{ where }  B_\theta(p) =w_\theta(u_{i_p}, n+1) ,\\
C^{(m)}_\theta&:=& A_\theta \times \left(W_\theta\right)^{m-1} \times B_\theta= \left( \left( C^{(m)}_\theta(j)\right)\right), j=1:n.
  \end{eqnarray}
Then, note that by rearranging summations and using indicator functions,  the second term of the right hand side of equation (\ref{eq:result1}) can be written as
  \begin{eqnarray}
  &&\begin{split}
& \sum_{m=1}^L \sum_{j=1}^n \left\{ \sum_{i_1=1}^L\sum_{i_2=1}^L\ldots \sum_{i_m=1}^L  A_\theta(j, i_1)\cdot W_\theta(i_1, i_{2})\cdots \right. \\
& \left. ~~~~~~~~~~~~~~~~\cdots W_\theta(i_{m-1},i_{m})\cdot B_\theta(i_m)\right\}\kappa_2(s-s_j,\sqrt{m+1} \cdot \underline{h}).
\end{split}~\label{eq:step1}
  \end{eqnarray}
As  a consequence, we obtain the following proposition.
 \begin{prop}
 \label{thm:result2}
 Consider the weight function $w_\theta(\cdot, \cdot)$ as in (\ref{eq:weight}), $\underline{h}$ as in equation (\ref{eq:h1h2}), and  equations (\ref{eq:S}) to  (\ref{eq:step1}). Then, the likelihood (\ref{eq:lik2}) can be written as
 \begin{eqnarray}
 &&  f_{\Theta}\left(s_{n+1}=s\vert \mathcal{D}_n\right)=f_{\Theta}\left(s_{n+1}=s\vert s_{(1:n)\backslash \mathcal{S}_{L+1}}\right)\nonumber \\
 && =\sum_{i\in(1:n)\backslash\mathcal{S}_{L+1}} w_\theta(i,n+1) \kappa_2(s-s_i,\underline{h})  +\sum_{m=1}^L \sum_{j=1}^n  C^{(m)}_\theta(j)\kappa_2(s-s_j,\sqrt{m+1} \cdot \underline{h}).\nonumber\\
 &&~\label{eq:result2}
  \end{eqnarray}
 \end{prop}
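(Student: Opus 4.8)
The plan is to obtain Proposition~\ref{thm:result2} directly from Proposition~\ref{thm:result1} by a purely algebraic reorganization of the summations in (\ref{eq:result1}), so that no new analytic content is required. The first term of (\ref{eq:result1}), namely $\sum_{i\in(1:n)\backslash\mathcal{S}_{L+1}} w_\theta(i,n+1)\kappa_2(s-s_i,\underline{h})$, is copied verbatim into (\ref{eq:result2}); hence the whole task reduces to showing that the second, double-nested term of (\ref{eq:result1}) equals $\sum_{m=1}^L\sum_{j=1}^n C^{(m)}_\theta(j)\,\kappa_2(s-s_j,\sqrt{m+1}\cdot\underline{h})$.

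First I would convert the ordered index sum $\sum_{1\le i_1<i_2<\cdots<i_m\le L}$ in (\ref{eq:result1}) into an unrestricted iterated sum $\sum_{i_1=1}^L\cdots\sum_{i_m=1}^L$, with the strict ordering enforced only through indicator functions. The crucial observation is that the chain $W_\theta(i_1,i_2)W_\theta(i_2,i_3)\cdots W_\theta(i_{m-1},i_m)$, each factor carrying the indicator $I_{\{i_k<i_{k+1}\}}$, is nonzero precisely when $i_1<i_2<\cdots<i_m$, so it reproduces each admissible tuple exactly once and annihilates all others. Likewise $A_\theta(j,i_1)=I_{\{j\in(1:u_{i_1}-1)\backslash\mathcal{S}_{i_1}\}}\,w_\theta(j,u_{i_1})$ absorbs both the innermost restriction on $j$ and the leading weight, while $B_\theta(i_m)=w_\theta(u_{i_m},n+1)$ carries the terminal weight. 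Since $A_\theta(j,i_1)$ vanishes whenever $j\notin(1:u_{i_1}-1)\backslash\mathcal{S}_{i_1}$, the inner sum over $j$ may be extended to all of $1:n$ without changing its value. These substitutions produce exactly expression (\ref{eq:step1}).

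The final step is to recognize the bracketed coefficient in (\ref{eq:step1}), namely $\sum_{i_1=1}^L\cdots\sum_{i_m=1}^L A_\theta(j,i_1)W_\theta(i_1,i_2)\cdots W_\theta(i_{m-1},i_m)B_\theta(i_m)$, as the $j$-th entry of the matrix product $A_\theta\times(W_\theta)^{m-1}\times B_\theta$, which is by definition $C^{(m)}_\theta(j)$. A quick index count confirms the matching: the single factor $A_\theta$ supplies the inner weight $w_\theta(j,u_{i_1})$, the $m-1$ factors of $W_\theta$ supply the $m-1$ interior weights $w_\theta(u_{i_k},u_{i_{k+1}})$, and $B_\theta$ supplies the terminal weight $w_\theta(u_{i_m},n+1)$; the contracted indices $i_1,\ldots,i_m$ are exactly those summed in forming the product. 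Substituting $C^{(m)}_\theta(j)$ into (\ref{eq:step1}) and reattaching the untouched first term yields (\ref{eq:result2}).

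I expect the only delicate point to be the combinatorial bookkeeping in the second step: one must verify that replacing the strictly ordered sum by an unrestricted iterated sum guarded by the indicators $I_{\{i_k<i_{k+1}\}}$ neither double-counts any ordered tuple nor admits spurious contributions from coincident indices $i_k=i_{k+1}$ (excluded by strict inequality) or from out-of-order tuples. Once this equivalence is established, the remainder is a routine invocation of the definition of matrix multiplication, and no estimates or limiting arguments are needed.
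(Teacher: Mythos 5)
Your proposal is correct and takes essentially the same route as the paper: the paper also obtains Proposition~\ref{thm:result2} from Proposition~\ref{thm:result1} purely algebraically, by rewriting the ordered sum over $1\leq i_1<i_2<\cdots<i_m\leq L$ as unrestricted iterated sums whose ordering and range restrictions are absorbed into the indicator functions built into $A_\theta$ and $W_\theta$ (giving (\ref{eq:step1})), and then recognizing the bracketed coefficient as the $j$-th entry of $A_\theta\times(W_\theta)^{m-1}\times B_\theta$, i.e.\ $C^{(m)}_\theta(j)$. Your attention to the combinatorial bookkeeping (no double counting, no spurious tuples with $i_k=i_{k+1}$) is exactly the content of the paper's phrase ``by rearranging summations and using indicator functions.''
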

Interestingly, we can see from (\ref{eq:result1}) and \ref{eq:result2}) that the likelihood, conditional on incomplete past data with $L$ missing data points, has  a nice structure. It happens to be a combination of different weighted kernel density estimates whose bandwidths range over $\underline{h}$, $\sqrt{2}\underline{h}, \ldots,\sqrt{L+1} \underline{h}$, where the kernel density estimate with bandwidth $\sqrt{m+1} \underline{h}$ is based on data available before the $m$th missing time-point, and its weights are computed based on products of terms depending on $m-$tuples from the time-points corresponding to missing data. 
 \begin{rem}
 \label{rem1}
 To model incomplete temporal data,  the following computationally simpler weighted kernel density could be considered, which is same as using only the first summation term in our likelihood (\ref{eq:result2}) after renormalisation.
  \begin{eqnarray}
  \tilde{f}_{\Theta}\left(s_{n+1}=s\vert s_{(1:n)\backslash \mathcal{S}_{L+1}}\right) &=& \sum_{i\in(1:n)\backslash\mathcal{S}_{L+1}}\tilde{w}_\theta(i,n+1)\cdot \kappa_2(s-s_i,\underline{h}) , \label{eq:PR}\\  
 \mbox{where } \tilde{w}_\theta(i,n+1) &=& \frac{w_\theta(i,n+1)}{\sum_{j\in(1:n)\backslash\mathcal{S}_{L+1}}w_\theta(j,n+1)} . 
 \end{eqnarray}
 \citet{porter_reich2012} use such a formulation for modeling the next crime event location based on crime series of different offenders.  However, we note that for incomplete temporal data (i.e. when $\mathcal{S}_{L+1}$ is non-empty), this would not be a stochastically consistent model specification. For example, in particular, it is easy to check that 
 \begin{eqnarray}
 \tilde{f}_{\Theta}\left(s_{n+1}\vert s_{(1:n-1)}\right)\ne \int  \tilde{f}_{\Theta}\left(s_{n+1}\vert s_{(1:n)}\right)\cdot \tilde{f}_{\Theta}\left(s_{n}\vert s_{(1:n-1)}\right) ds_n. \end{eqnarray}
In our formulation,  we give a stochastically consistent specification by starting with the model (\ref{eq:fullmod}) and then obtaining the implied likelihood (\ref{eq:result1}) given incomplete past data. In Appendix Section \ref{S:simulation}, we illustrate using a simulation study that using the model  (\ref{eq:PR}) instead of (\ref{eq:result1}) for modeling incomplete temporal data can be inadequate for estimating the parameters.
 \end{rem}
 
 \subsection{Expert-prior}
\label{S:expert}
Based on the second author's extensive interactions with several police officers co-deployed in anti-naxalite operations, we developed the following methodology for constructing the expert-prior on possible gang-locations. On day $n$, we obtain the probability density, denoted by $E^{(n}(s)$, at location $s$, based on expert knowledge, using the following steps. The steps are also illustrated in Figure \ref{F:expert_prior_steps}, for day $n=716$ as an example, for Gang A.
\begin{itemize}
\item[] (i) Identify forests and CRPF camps on the map
\item[] (ii) Mark the locations with high forest density, i.e. $F(s)\geq 50\%$.
\item[] (iii) Unmark all locations among (ii) that are less than 3 km away from any CRPF camp.
\item[] (iv) Identify the last $k_0$ observed gang-locations (we illustrate with $k_0=3$). Retain marked locations from (iii) that belong to the extended convex hull (i.e. convex hull extended by 10 km) around last $k_0$ gang-locations. Unmark all other locations.
\item[] (v) If available intelligence through the informant network is not more than 10 days old, then in addition to what remains marked in (iv), also mark locations within 10 km of the location indicated by such intelligence. 
\end{itemize}
All locations that remain marked at the end of step (v) together form the support of the expert-prior map. Steps (i) to (iii) imply that locations with high forest density and  away from CRPF camps are likely to be preferred by the gang. However, there would be several such locations across the Jharkhand state.  Therefore, the police tend to consider recent knowledge of past gang-locations,  typically past 3 observed locations. Step (iv) formalizes this by focusing on an extended convex hull around last three observed locations provided they have high forest density and are away from CRPF camps. We consider an extended convex hull to allow for fact that gangs could have moved since intelligence was received and hence include a buffer of 10 km around the last 3 observed locations. Additional intelligence, as mentioned in step (v), is incorporated when available.   If the locations based on intelligence and that remain marked at the end of step (iv) overlap, then double the weight is assigned to the overlapping locations to indicate a relatively higher likelihood of the gang's presence. Such an overlap  occurs between intelligence input and the rest of expert-prior  in part (v) of Figure \ref{F:expert_prior_steps}. In part (vi) of the figure, we  also show what turned out to be the actual gang-location.

The above-mentioned process can be mathematically defined as follows. At location $s$, let the forest density be denoted by $F(s)$,  its distance to the closest CRPF camp by $C(s)$, its distance to the location of the first intelligence input (if available) by $I_1(s)$, distance to the location of the second intelligence input (if available) by $I_2(s)$, and $\mathcal{X}_{k_0}$ denote the extended convex hull around recent $k_0$ observed locations. Then, the probability density  based on expert knowledge can be obtained as:
\begin{eqnarray}
&&E^{(n)}(s) \propto E^{(n)}_0(s)=\nonumber\\
&& \begin{cases} 1, ~\mbox{ if }F(s)\geq 0.5, ~C(s)>3 \mbox{ km and } s\in \mathcal{X}_{k_0}, \\
 1,~~~\mbox{if } I_1(s)\leq 10 \mbox{ km},  \mbox{ but }F(s)< 0.5 \mbox{ or } C(s)\leq 3 \mbox{ km or } s\notin \mathcal{X}_{k_0}, \\
 2,~~~\mbox{if both } I_1(s), I_2(s)\leq 10 \mbox{ km}   \mbox{ but }F(s)< 0.5 \mbox{ or } C(s)\leq 3 \mbox{ km or } s\notin \mathcal{X}_{k_0}, \\
 2,~~~\mbox{ if }F(s)\geq 0.5 \mbox{ and }C(s)>3 \mbox{ km and } s\in \mathcal{X}_{k_0}\mbox{ and }I_1(s)\leq 10 \mbox{ km},\\
 3,~~~\mbox{ if }F(s)\geq 0.5,~C(s)>3 \mbox{ km and } s\in \mathcal{X}_{k_0}\mbox{ and both }I_1(s),~I_2(s)\leq 10 \mbox{ km},\\
 0,~~~\mbox{ for any other condition}.
 \end{cases}\label{eq:heatmap0}
\end{eqnarray}
Although equation (\ref{eq:heatmap0}) assumes that at most two different intelligence inputs may be available on any given day, it can be easily extended to similarly incorporate more than two intelligence inputs. To obtain the probability density $E^{(n)}(s)$ we need to normalize as follows:
\begin{eqnarray}
&&E^{(n)}(s) = \frac{E^{(n)}_0(s)}{\int E^{(n)}_0(s)ds}.\label{eq:heatmap}
\end{eqnarray}
 \begin{figure}[ht]
\center
\caption{\label{F:expert_prior_steps} Illustration of steps for constructing the expert-prior $E^{(n)}(s)$ for day $n=716$. (vi) shows the actual gang-location on that day for reference.}
\begin{subfigure}{.45\textwidth}
\center
\caption{Identify forest and CRPF camps}
\includegraphics[width=6cm, height=6cm, angle=0]{./figs/priormap_step0_gangA_27}
\end{subfigure}
~\begin{subfigure}{.45\textwidth}
\center
\caption{Mark high forest density locations}
\includegraphics[width=6cm, height=6cm, angle=0]{./figs/priormap_step1_gangA_27}
\end{subfigure}

\begin{subfigure}{.45\textwidth}
\center
\caption{Unmark locations close to CRPF camps}
\includegraphics[width=6cm, height=6cm, angle=0]{./figs/priormap_step2_gangA_27}
\end{subfigure}
~\begin{subfigure}{.45\textwidth}
\center
\caption{Retain only locations within extended convex hull of past 3 locations}
\includegraphics[width=6cm, height=6cm, angle=0]{./figs/priormap_step4_gangA_27}
\end{subfigure}

\begin{subfigure}{.45\textwidth}
\center
\caption{Mark additional locations based on informant intelligence}
\includegraphics[width=6cm, height=6cm, angle=0]{./figs/priormap_step5_gangA_27}
\end{subfigure}
~\begin{subfigure}{.45\textwidth}
\center
\caption{Expert-prior map for the day along with the observed gang-location (blue dot) }
\includegraphics[width=6cm, height=6cm, angle=0]{./figs/priormap_step5_withactual_gangA_27}
\end{subfigure}
\end{figure}
We highight that the expert prior is stated in terms of the possible gang-locations, and not in terms of the model parameters $\Theta$, as  required in a  standard Bayesian analysis. Further, our context requires that on any given day, we combine two priors: (a)  the expert-prior map constructed for that particular day, (b) prior on the model parameters $\Theta$ that has been obtained through posterior distribution updates of the data and priors from the past. In the next section, we device an approach to update the posterior considering these requirements.
 \section{Estimation and prediction} 
 \label{S:est_pred}
 Here, we give our procedure for sequential Bayesian estimation for the model by incorporating the expert-prior developed in Section \ref{S:model_expert}, and  for obtaining the predictive density of the gang-location on any given day $n$. We note that the procedure can be applied sequentially on data that becomes available on multiple gangs on any given day. We assume that the parameters of the model $\Theta=(\theta, h)$ are common for all gangs, but for any day $n$, the locations $s_{1:n}$, likelihood $f_{\Theta}\left(s \vert \mathcal{D}_n \right)$ and expert-map $E^{(n)}(s)$ depend on the particular gang $g$ being considered while updating the posterior. It is possible that on a given day, location for multiple gangs may be observed. In that case, we just repeatedly run the Bayesian update procedure described below, by considering one gang at a time on that day. Predictive density is computed and predictive assessment is carried out for each gang. With this understanding, in the interest of notational simplicity, we suppress the dependence of the past data, likelihood and expert-map on the specific gang $g$.
\subsection{Sequential Bayesian Estimation} 
 \label{S:estimation}
 On any given day $n$, we will have the following known to us,
 \begin{itemize}
\item[](i) the likelihood function for the gang-location on the $n$th day given the past observed locations, i.e. $f_\Theta(s\vert \mathcal{D}_{n-1})$ (as in equations \ref{eq:proposed_model}, \ref{eq:result1}),
 \item[](ii) a prior $\Pi^{(n)}_d(\Theta), ~~\Theta\in \mathcal{S}_\Theta$, obtained from the recent previous Bayesian update,
 \item[](iii) an expert-prior map $E^{(n)}(s)$ for possible locations of the gang on the given day ($n$), as discussed in Section \ref{S:expert}, along with a credibility weight $p^{(n)}$, which is the belief over expert knowledge. 
\end{itemize}
We note that in practice, the expert map $E^{(n)}$ and the corresponding credibility weight $p^{(n)}$ will be subjective. What we suggest in this paper is a reasonable recommendation based on  the experience of the second author's discussions with police officers co-deployed in anti-naxalite operations, and can be taken as a starting point for future enhancements. In Section \ref{S:expert}, we provided systematic approach to obtain the expert map $E^{(n)}$. Further, we use the following for $p^{(n)}$.
\begin{eqnarray}
p^{(n)}= \begin{cases}0.5,\mbox{ if informant intelligence is available during prior construction},\\
0.1,\mbox{ if informant intelligence is not available.}\end{cases} \label{eq:p_n}
\end{eqnarray}
The expert-prior is stated in terms observables,  i.e. the possible gang-locations, and not stated in terms of the model parameters $\Theta$. However, for a standard Bayesian analysis, we need the prior in terms of model parameters in order to subsequently derive the posterior inference. One approach we explored to addressing this issue was to interpret $E^{(n)}(s)$ as the marginal distribution based on the likelihood (\ref{eq:proposed_model}), and  following \citealt{efron2014} to derive the desired prior $\Pi^{(n)}(\Theta)$  as a solution to the following equation, 
\begin{eqnarray}
\int  f_\Theta\left(s \vert\mathcal{D}_n\right) \Pi^{(n)}(\Theta) d\Theta = E^{(n)}(s). \label{eq:prior_lik_marg}
\end{eqnarray}
It is not reasonable to expect that  the equation (\ref{eq:prior_lik_marg}) will be feasible in practice. Therefore, to logically incorporate the expert-map as part of the prior on parameters, we device a different approach. We first extend the parameter set by including an additional dummy value $\Theta_0$ as 
 \begin{eqnarray}
 \tilde{\mathcal{S}}_{\Theta} = \mathcal{S}_{\Theta} \cup \{\Theta_0\}, \label{eq:ext_par}
\end{eqnarray}
and define the  extended likelihood $\tilde{f}_{\Theta}$, as
\begin{eqnarray}
\tilde{f}_{\Theta}\left(s_{n}\vert \mathcal{D}_{n-1}\right) =\begin{cases} f_{\Theta}\left(s_{n}\vert \mathcal{D}_{n-1}\right), \mbox{ if } \Theta\in \mathcal{S}_\Theta \\
E^{(n)}(s)~~~~~~~~, \mbox{ if }\Theta=\Theta_0.\end{cases} \label{eq:ext_lik}
\end{eqnarray} 
 Since we need to give a credibility weight of $p^{(n)}$ on the expert map, the corresponding prior on the extended parameter set can be written as
 \begin{eqnarray}
\tilde{\Pi}^{(n)}(\Theta)= \begin{cases}(1-p^{(n)})\cdot\Pi^{(n)}_d(\Theta), ~~\Theta\in \mathcal{S}_\Theta, \\
p^{(n)}~~~~~~~~~~~~~~~~~~~~, ~~\Theta=\Theta_0. \end{cases} \label{eq:ext_prior}
\end{eqnarray}
 The posterior distribution on the extended parameter set, i.e. $\Theta\in \tilde{\mathcal{S}}_\Theta $, based on the likelihood  (\ref{eq:ext_lik}) and prior  (\ref{eq:ext_prior}),  is given by
 \begin{eqnarray}
\tilde{\Pi}^{(n)}(\Theta\vert \mathcal{D}_{n})\propto&& \begin{cases} \tilde{f}_\Theta(s_{n}\vert\mathcal{D}_{n-1})\cdot \tilde{\Pi}^{(n)}(\Theta),\mbox{ if data } s_n \mbox{ is observed}, \\
\tilde{\Pi}^{(n)}(\Theta)~~~~~~~~~~~~~~~~~~, \mbox{ if data } s_n \mbox{ is not observed}.\end{cases} \label{eq:posterior_ext}
\end{eqnarray}
Note that  equation (\ref{eq:posterior_ext}) explicitly recognizes that if  locations for none of the gangs is observed on day $n$, then posterior update for $\Theta$ is not possible, and hence the latest known prior on $\Theta$ will continue to prevail.
Before proceeding to repeat the  prediction and estimation for day $(n+1)$, we will require $\Pi^{(n+1)}_d(\Theta)$, which is given by
\begin{eqnarray}
\Pi^{(n+1)}_d(\Theta)= \frac{ \tilde{\Pi}^{(n)}(\Theta\vert \mathcal{D}_{n})}{1-\tilde{\Pi}^{(n)}(\Theta_0\vert \mathcal{D}_{n})},  ~~\Theta \in \mathcal{S}_\Theta. \label{eq:pi_n_d}
\end{eqnarray} 
 Equation (\ref{eq:pi_n_d}) is  the posterior distribution (\ref{eq:posterior_ext}) restricted to the parameters of the main likelihood, i.e. $\Theta \in \mathcal{S}_\Theta $.  In our problem, we would require a mechanism to update our belief of the parameters sequentially as and when new data on gang-locations become available over different days. Particle filter algorithms are particularly useful for carrying out sequential updates, see \citet{carvalhoetal2010a} and \citet{Daietal2022} for review of particle learning and other sequential updating procedures.  
 Also, while \citet{carvalhoetal2010b} consider sequential Bayesian updates in the context of general mixtures, developing a particle learning algorithm on those lines with our likelihood (\ref{eq:result1}) does not appear straightforward,  and we defer that exploration for a future paper. For our work, it suffices to use a basic particle filtering (PF) approach combined with kernel smoothing and some computational enhancements, which we will describe below. Given that our expert-prior construction depends on the last 3  observed gang-locations, our update starts on day $n=n_0$, which is after the third data point has been observed for at least one of the gangs. The steps of the PF algorithm, after fixing a large value of $N$,  are as follows:
\begin{itemize}
\item[] Step 0. For $n=n_0$, let $\tilde{\Pi}_d^{(n)}$ be a non-informative flat prior on a reasonably large compact support for the parameters. Draw  $\Theta_1, \Theta_2, \ldots, ~\Theta_N \stackrel{iid}{\sim} \tilde{\Pi}_d^{(n)}.$
\item[] Step 1. Draw $\tilde{\Theta}_1, \tilde{\Theta}_2, \ldots, ~\tilde{\Theta}_N$ as iid from a distribution supported on $\Theta_0, \Theta_1, \ldots, \Theta_N$ such that $$P(\tilde{\Theta}_i= \tilde{\Theta}) =\begin{cases}p^{(n)}, ~\mbox{ if }\tilde{\Theta}=\Theta_0,\\ \frac{(1-p^{(n)})}{N}, \mbox{ if }\tilde{\Theta}=\Theta_j,~ j=1,2,\ldots,N.   \end{cases}.$$
\item[] Step 2. Approximate $\tilde{\Pi}^{(n)}(\Theta\vert \mathcal{D}_n)$ with the discrete distribution which can take values $\tilde{\Theta}_{1},  \tilde{\Theta}_{2}, \ldots, \tilde{\Theta}_{N}$ with probablities $p_1, p_2, \ldots, p_N$, where 
\begin{eqnarray}
p_i=\begin{cases} \frac{\tilde{f}_{\tilde{\Theta}_i}(s_n\vert \mathcal{D}_{n-1})}{\sum_{i=1}^N \tilde{f}_{\tilde{\Theta}_i}(s_n\vert \mathcal{D}_{n-1})}, \mbox{ if location } s_n \mbox{ on day }n\mbox{ is observed},\\
\frac{1}{N}~~~~~~~~~~~~~~~~~~,\mbox{ if location } s_n \mbox{ on day }n\mbox{ is not observed}.\end{cases}\label{eq:PFstep1}
\end{eqnarray}
We note that some of the values in $\tilde{\Theta}_{1},  \tilde{\Theta}_{2}, \ldots, \tilde{\Theta}_{N}$ may be repeated. Let $\tilde{\Theta}_{00}$,$  \tilde{\Theta}_{01}$, $\tilde{\Theta}_{02}, \ldots, \tilde{\Theta}_{0N_0}$ denote the distinct values, where $\tilde{\Theta}_{00}=\Theta_0$, $N_0\leq N$. Then  $\tilde{\Pi}^{(n)}(\Theta\vert \mathcal{D}_n)$ can be approximated by the discrete distribution given by 
\begin{eqnarray}
&&\begin{bmatrix}\mbox{Possible values} &\tilde{\Theta}_{00} & \tilde{\Theta}_{01} &\tilde{\Theta}_{02}&\ldots & \tilde{\Theta}_{0N_0}\\ \mbox{Probabilities}&q_0 & q_1 &q_2 & \ldots &q_{N_0}. \end{bmatrix},~\label{eq:PFstep2}\\
\mbox{where}&&\nonumber\\
&&q_j = \sum_{i\in\{1, \ldots,N\}:\tilde{\Theta}_i=\tilde{\Theta}_{0j}}p_i, ~j=1,2,\ldots, N_0\label{eq:PFstep2q}
\end{eqnarray}
\item[] Step 3. Our last step for day $n$ is to obtain a sample $\Theta_1, \Theta_2, \ldots, \Theta_N$ from $\Pi_d^{(n+1)}(\Theta)$ as in (\ref{eq:pi_n_d}), which is essentially $\tilde{\Pi}^{(n)}(\Theta\vert \mathcal{D}_n)$ obtained in the previous step but restricted to only values of $\Theta \ne \Theta_0$.
\item[] Step 4.  Set $n\rightarrow n+1$ and repeat steps 1,2,3.
\end{itemize}

In our implementation, we apply a few computational enhancements to the above steps. First, while step 3 can be implemented as described above, one of the known issues that can arise, if the values of the particles are restricted to a fixed set of values, is degeneracy (see \citealt{carvalhoetal2010a}), where the posterior distribution of a parameter may start concentrating on a single value (particle).  To avoid this, we use a kernel density approximation for $\Pi_d^{(n+1)}(\Theta)$ before drawing the new set of particles for the next stage. The possible parameter values in Step 3 are $\Theta_{0j}=(\theta_{0j},h_{0j})$ for $j=1,2,\ldots, N_0$.
We note that the possible values of each parameter are positive  and assume that each is contained within a (large enough) compact interval, i.e. $h_{0j}\in[h_{min}, h_{max}]$ and $\theta_{0j}\in[\theta_{min}, \theta_{max}]$. Since beta-kernel estimators are a natural choice for estimating probability densities with compact support (e.g. \citealt{chen1999}), we use this estimator to approximate $\Pi_d^{(n+1)}(\Theta)$.  Accordingly, we first define  transformed variables 
$$h_{0j}^\prime= \frac{h_{0j}-h_{min}}{h_{max}-h_{min}} \mbox{ and }\theta_{0j}^\prime= \frac{\theta_{0j}-\theta_{min}}{\theta_{max}-\theta_{min}}.$$
Then, we obtain the beta-kernel density estimate for the parameters as
\begin{eqnarray}
G(h^\prime,\theta^\prime) = \frac{1}{N_0}\sum_{j=1}^{N_0} \frac{q_j}{1-q_0}\cdot \phi_{a_i,b_i}(h^\prime)\cdot\phi_{c_i,d_i}(\theta^\prime), \label{eq:kern_theta}
\end{eqnarray}
 where $\phi_{a_i,b_i}(h^\prime)$ and $\phi_{c_i,d_i}(\theta^\prime)$ are beta distributions  in variables $h^\prime$ and $\theta^\prime$ with $(a_i, b_i)$ and $(c_i, d_i)$ denoting the respective shape parameters of the beta densities. We compute $a_i$ and $b_i$ by matching the mean and standard deviation of $\{h^\prime_{oj} \}$ weighted by $\{\frac{q_j}{1-q_0}\}$, $j=1,2,\ldots, N_0$ , and similarly for $(c_i,d_i)$. Then we draw a sample $\Theta_1, \Theta_2,\ldots, \Theta_{N}$ using (\ref{eq:kern_theta}) with appropriate rescaling. Second, to speed up the computations, we approximate the values of the so drawn $\Theta_1, \Theta_2,\ldots, \Theta_{N}$, to the nearest deciles in the distribution. This simplifies the computational effort significantly as we would then need to compute $p_i$ in Step 2 only at distinct values of the decile-approximated $\Theta_i$, and just multiply with their frequency of occurrence. 
\subsection{Predictive density} 
\label{S:pred_density}
We can use the model as estimated in the previous section to generate predicted density at any location $s$, for day $n$, conditional on the past data available until day $(n-1)$.  The predictive density for the location of a gang on day $n$, of course before observing the gang's location $s_n$ on that day,  can be written as
\begin{eqnarray}
\widehat{f}(s \vert \mathcal{D}_{n-1}) &=& \int \tilde{f}_\Theta(s|\mathcal{D}_{n-1})\cdot\tilde{\Pi}^{(n)}(\Theta) d\Theta \label{eq:predictive0}\\
 &=&(1-p^{(n)})\cdot \int f_\Theta(s|\mathcal{D}_{n-1})\cdot\Pi^{(n)}_d(\Theta)d\Theta + p^{(n)}\cdot E^{(n)}(s).\label{eq:predictive}
\end{eqnarray}
We regret the slight abuse of notation in the interest of keeping equation (\ref{eq:predictive0}) elegant. $\tilde{\Pi}^{(n)}(\Theta)$ is strictly not a pdf on $\tilde{\mathcal{S}}_\Theta$ as it has point mass at $\Theta_0$, but the meaning is  clear in (\ref{eq:predictive}). As a practical matter, we compute the predictive density (\ref{eq:predictive}) on a finite grid of locations. We cover the state of Jharkhand, India with a grid of boxes, each of dimension 2.5 km $\times$ 2.5 km.  Then, we compute the predictive density for the gang's presence at the centers of the these grid boxes, which we denote by
 \begin{eqnarray}
\mathcal{S}_{grid} = \{s_{01}, s_{02}, \ldots, s_{0p} \}. \label{eq:space_set}
\end{eqnarray}
For any day $n$, our aim will be to mark locations with high predicted density, so as to guide the police on  where patrolling should be focused, and  which of the CRPF camps may need to be alerted. 

\subsection{Predictive assessment} 
\label{S:pred_assess}
 We assess the predictive performance of the model in its ability to predict the actual observed gang-location $s_n$ on any given day $n$, by using the following two metrics.

~\\
(1) {\bf Required Area to be Monitored (RAM):} It is the cumulative area of the locations, in descending order of the predictive probability density, which have to be monitored to capture the actual next location. To assess how well the model predicts the actual gang-location $s_n$ on day $n$, the RAM  is defined as the total area of  all the grid boxes with their centers in the set 
 \begin{equation}
 \mathcal{M}= \left\{ s\in \mathcal{S}_{grid} : ~\widehat{f}(s \vert \mathcal{D}_{n-1})\geq  \hat{f}\left(s_n\vert t \right)\right\} , \label{eq:M}
 \end{equation}
 where the predictive density $\hat{f}$ is as in equation (\ref{eq:predictive}). 

~\\
 Smaller the value of RAM,  better is the model performance. Since RAM does not consider the proximity of  the other locations highlighted for monitoring, to the actual gang-location, we complement it by proposing a second metric, which we call `Area Under the Proximity Curve'.

~\\
(2){\bf Area Under the Proximity Curve (AUPC):}  First, we define the proximity curve, where the x-axis has the different possible values for the percentage area ($p$) that can be monitored, e.g. top 10\%, top 20\% etc. Correspondingly, the y-axis is the minimum of the distances ($m(p)$) of the grid boxes within the given area to be monitored, to the actual gang location. To be precise,
\begin{eqnarray}
&&m(p) = Min\left\{ \mbox{ distance of } s^\prime\mbox{ to }s_n, ~\forall~s^\prime \in \mathcal{T}_p   \right\}, \mbox{ where }\nonumber\\
&&\begin{split}
\mathcal{T}_p&= \left\{ \mbox{ Top 100}p\% \mbox{ grid boxes with center } s^\prime,\right.\\
& \left. \mbox{ in descending order of } \hat{f}\left(s^\prime\vert \mathcal{D}_{n-1}\right)\right\}.
\end{split}
\end{eqnarray}
To assess how well the model predicts the actual gang-location $s_n$ on day $n$, we use the `Area Under the Proximity Curve' (AUPC). The smaller the AUPC better is the model performance. It may be noted that where RAM is high, the model could still be useful if AUPC is low.
\FloatBarrier
\section{Results}
\label{S:results}
 We apply the  model and sequential estimation procedure described in Sections \ref{S:model_expert} and \ref{S:est_pred} to data on three major insurgent gangs (which we label as A, B, C) that operate in the state of Jharkhand, India. In this section, we first discuss the parameter estimates followed by the predictive assessment for the model. Thereafter, we highlight that the importance of using the expert-prior in the model and finally we show how the model could be implemented in practice.
 \subsection{Parameter Estimates}
 The posterior distribution of the model parameters is updated sequentially as and when a location for any of the gangs is observed. There are 96 observed instances, but  on irregularly separated days spanning over 1305 days, for these  three gangs. Figure \ref{F:summary_parA} shows the parameter estimates obtained. For both the parameters, we can see, as expected, that the 95\% credible intervals get narrower as more data become available. However, the number of data points available is still limited in our context, which is reflected in the wide confidence intervals. The value of the posterior mean of $h$ is around 8 km, after the latest update, which can be interpreted as the radius to considered around historical data points. The value of posterior mean for $\theta$ is about 280 days,  after the latest update, which suggests  that one must not just consider the recent few locations of the gang but should weight  a longer history. 

\subsection{Predictive assessment}
Figure \ref{F:GOF1} illustrates the predictive assessment metrics for gang A, viz. RAM and AUPC, which were described in Section \ref{S:pred_assess}. There are 51 observed instances of gang A in our data. The figure shows 48 instances starting from the 4th instance because 3 past instances are required for constructing the expert-prior.  We recall that for either metric, smaller the values better the model prediction. We see that both the metrics generally  improved as more data has become available. Further, RAM for the model is below 1000 sq km in 33 observed instances. In  19 observed instances, the RAM is less than 500 sq km. It is important to emphasise that prediction of the locations of insurgent criminal gangs is a difficult problem as the gangs usually move over very large areas, and data on their locations is not observable very frequently. For example, in our data,   gang A operates in an area of over 5500 sq km, which they can further expand.  One benefit of our model is that it provides a systematic approach to potentially narrow down the focus to a relatively smaller area.  Of course, there could be cases when the model may not help much, as in the 13th instance, where the RAM happens to be over 5000 sq km. Nevertheless,  it is important to note that there are several instances where the focus area is narrowed down to less than 1000 sq km or even 500 sq km. This itself is significant for police operations as it can be used to alert the CRPF camps in the areas highlighted by the model, as well as refine the ``daily task-sheets",  prepared by the police regularly to indicate locations or directions to monitor.  We further note that AUPC can provide additional information to complement RAM. To illustrate this, we highlight (by circling) observed instances 29, 32 and 34 of gang A in both parts (i) and (ii) of Figure \ref{F:GOF1} .  We can see that while the RAM has decreased over these days, the AUPC has increased. A higher RAM but lower AUPC means that the required area to monitor may be more, but some of the other locations marked for monitoring may be close to the actual location. To gain more clarity, Figure \ref{F:proximity-curve_A29_34} shows the proximity curves corresponding to the 29th and 34th observed instances of gang A.  RAM is lower for the 34th instance since the proximity curve touches the x-axis sooner, but its minimum distance of monitored locations to actual location  is much higher. On the other hand, although RAM is higher for the 29th instance, many of the locations marked for monitoring happen to be close, in fact within 20 km of the actual gang location. That this model with such limited observable data can lead the security forces to locations that may be close to the gang's presence is significant in itself, in the context of tracking insurgent gangs as it increases the chance of encounters. 
 
 \subsection{Importance of prior}
Further, our analysis highlights that prior information can be very important in planning police operations. We have provided an approach to construct an expert-prior in Section \ref{S:expert}.  We can see this in Figures \ref{F:with_no_prior_AUPCRAM} (i) and (ii), which highlight (by circling) several instances where the model-with-prior has better RAM and AUPC than the model-without-prior respectively, for gang A. To summarize the comparison for all the 3 gangs, Table \ref{T:GOF_gang} shows the percentage of instances for each of the gangs A, B and C,  when the model-with-prior performed (i) strictly better than, and (ii) at least as good as, the model-without-prior, in terms of RAM as well as AUPC.   We see from part (i) of the table that the AUPC for the model-with-prior turned out to be better more often (i.e. in more than 50\% instances) for all the three gangs.  In terms of RAM, the model-with-prior was better more often for gangs A and B, but not for C.  However, from part (ii) of the table, we note that even in the case of gang C,  in more than 50\% instances the model-with-prior was as at least as good as (if not better than) the model-without-prior in terms of RAM. The prior we constructed is guided by counter-insurgency  tactics adopted by the police. The prior density happens to very coarse, taking constant value for several locations.  Although we do not show the results here, we note that using purely the prior without using the past data, does not yield good results in terms of both RAM and AUPC. However, our findings emphasise that the expert-prior used along with data can play an important role in the context of predicting movement of insurgent gangs. Our prior has potential for improvement with more inputs from the police officers once the model is implemented for use. Our work serves to demonstrate how inputs from the police officers can be usefully integrated with the data, and should motivate the design of improved systems  to capture more relevant information to help refine the priors. Such information may include comprehensive geo-coding of least-effort routes along hilly regions and water bodies, or may also include dynamic information on tell-tale signs of a gang's presence, such as unusual closure of market places, empty streets at unusual times or days in an otherwise busy location etc.

  Figure \ref{F:with_no_prior} shows the prior weight  given to expert knowledge along with the posterior weight obtained after the Bayesian update.  When the posterior weight is higher than the prior weight it can be interpreted as the expert prior turning out to be more important than the credibility it was initially assigned, and vice-versa. Such feedback from the model can be useful to police forces for regularly assessing the importance of the prior vis-a-vis data, and possibly guide improvements in data collection as well as construction of the prior. We note here that minor changes in the cutoffs for the forest density or distance to CRPF camps etc, involved in the expert-prior construction do not lead to drastically different results.

\begin{figure}[ht]
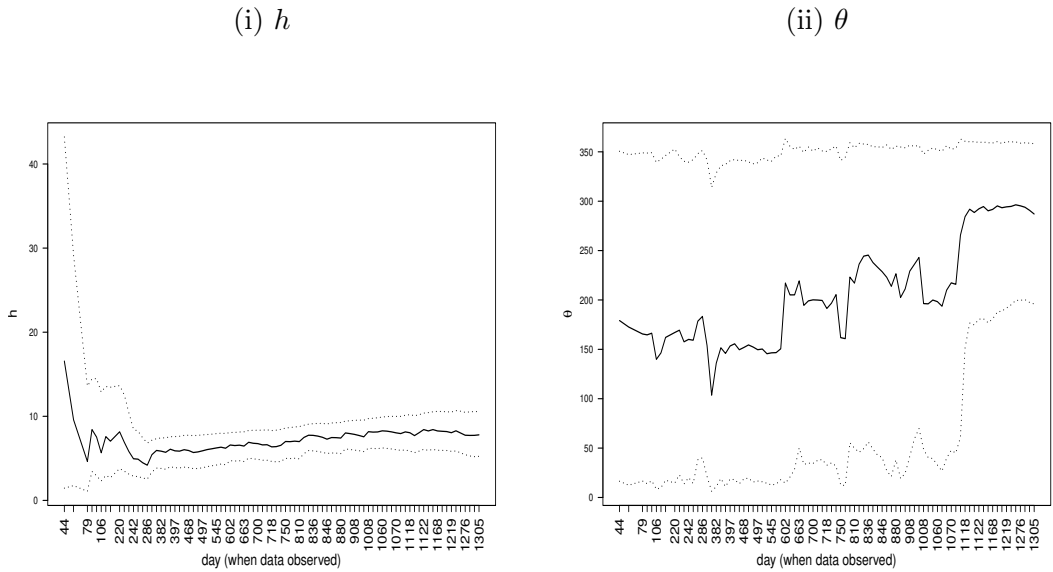

\center
\caption{\label{F:summary_parA} Posterior mean and 95\% credible intervals for the model parameters.   }
\begin{subfigure}{.45\textwidth}
\center
\caption{ $h$ }
\includegraphics[width=7cm, height=7cm, angle=0]{./figs/par_summary_h}
\end{subfigure}
~\begin{subfigure}{.45\textwidth}
\center
\caption{ $\theta$}
\includegraphics[width=7cm, height=7cm, angle=0]{./figs/par_summary_theta}
\end{subfigure}
\end{figure}
 
 \begin{figure}[ht]
\center
\caption{\label{F:GOF1} Predictive assessment metrics for gang A  }
\begin{subfigure}{.45\textwidth}
\center
\caption{ Required Area to Monitor (RAM)}
\includegraphics[width=6.5cm, height=6.5cm, angle=0]{./figs/RAM_gangA}
\end{subfigure}
~\begin{subfigure}{.45\textwidth}
\center
\caption{Area Under Proximity Curve (AUPC)}
\includegraphics[width=6.5cm, height=6.5cm, angle=0]{./figs/AUPC_gangA}
~\end{subfigure}
\end{figure}

 \begin{figure}[ht]
\center
\caption{\label{F:proximity-curve_A29_34} Proximity curves for the 29th and 34th observed instances of gang A }
\includegraphics[width=7cm, height=7cm, angle=0]{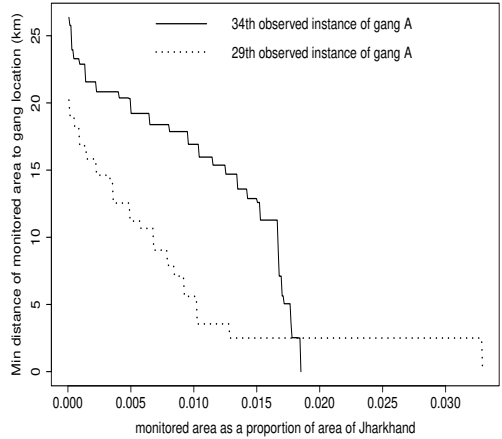}
\end{figure}

\begin{figure}[ht]
\center
\caption{\label{F:with_no_prior_AUPCRAM} Comparison of model-with-prior and model-without-prior, for gang A, with respect to RAM and AUPC.}
\begin{subfigure}{.45\textwidth}
\center
\caption{ highlighting instances where RAM is better for model-with-prior}
\includegraphics[width=6.5cm, height=6.5cm, angle=0]{./figs/RAM_with_no_prior_gangA}
\end{subfigure}
~\begin{subfigure}{.45\textwidth}
\center
\caption{highlighting instances where AUPC is better for model-with-prior}
\includegraphics[width=6.5cm, height=6.5cm, angle=0]{./figs/AUPC_with_no_prior_gangA}
~\end{subfigure}
\end{figure}

\begin{figure}[ht]
\center
\caption{\label{F:with_no_prior} Prior and posterior weight on expert knowledge}
\includegraphics[width=14cm, height=7cm, angle=0]{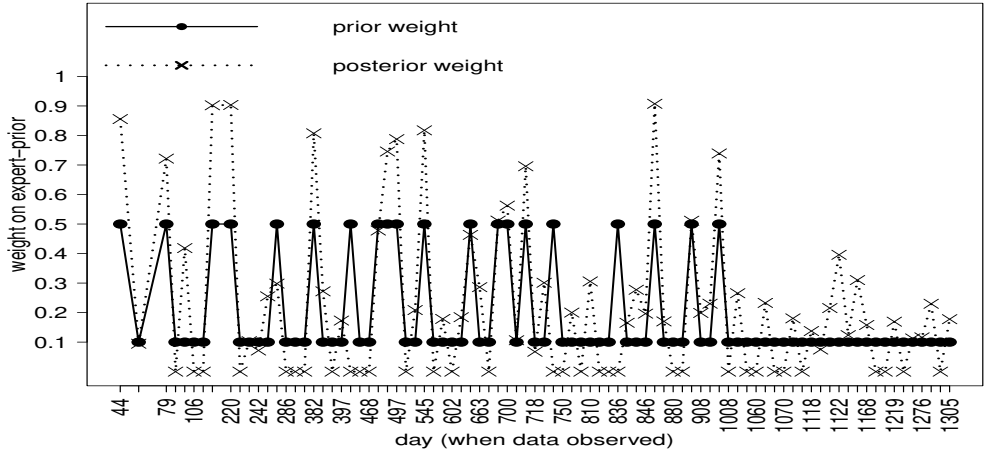}
\end{figure}

\begin{table}[ht]
\centering
\begin{subtable}{.45\textwidth}
\center
\caption{\% instances model-with-prior is better than model-without-prior}
\resizebox{.9\textwidth}{!}{\begin{tabular}{clccc}
  \hline
  gang & instances & RAM & AUPC \\ 
  \hline
  A&  48 & 58\% & 73\% \\ 
  B &  22 & 59\% & 68\% \\ 
  C &  17 & 35\% & 76\% \\ 
   \hline
\end{tabular}}
\end{subtable}
~~\begin{subtable}{.45\textwidth}
\center
\caption{\% instances model-with-prior is at least as good as model-without-prior}
\resizebox{.9\textwidth}{!}{\begin{tabular}{clccc}
  \hline
  gang & instances & RAM & AUPC \\ 
  \hline
  A&  48 & 62\% & 77\% \\ 
  B &  22 & 64\% & 68\% \\ 
  C &  17 & 53\% & 82\% \\ 
   \hline
\end{tabular}}
\end{subtable}
\caption{\label{T:GOF_gang}  Comparing model-with-prior and model-without-prior for 3 gangs}
\end{table}  
\subsection{Implementation in practice}
For implementation in practice, the model can be updated every day, based on data available up to the previous day along with the expert-prior relevant for the current day. The prediction from the model can be used to mark likely areas where the gang may be moving and consequently the nearby police camps can be alerted, to consider the prediction while planning their combing or routine patrolling operations for the day. As an illustration,  Figure \ref{F:pred_mapgangA} (i) shows the likely locations for predicting the locations of the three gangs as per our model for the most recent day in our data, viz. day 1305. The past movement locations of the three gangs can be seen as three separate clusters, gang A at the bottom, gang B top right and gang C top left. The red locations highlighted by our model corresponds to top 500 sq km and yellow to top 500-1000 sq km likely locations of the respective gangs on that day. On day 1305, the locations of gangs A and C were not observed. However,  gang B's location was observed and is shown as a blue dot for reference against our prediction.  We can see that the actual location of gang C, in this case, is captured within the red, i.e. the top 500 sq km area. For reference, we also mark the likely locations based purely on  the expert-prior in Figure \ref{F:pred_mapgangA} (ii). We note that the prior  is coarse and can only take at most 4 distinct values by virtue of its construction. So, we mark all the top locations that take the same prior density value in yellow colour. Consequently, the areas marked in Figure \ref{F:pred_mapgangA} (ii) for gangs A, B, C  respectively, are 962, 1737,  1306 sq km.   The model  which combines the expert-prior with the past data, can be seen to be an enhancement,  which further distinguished between highly likely (red) and relatively less likely locations (yellow). 

\begin{figure}[ht]
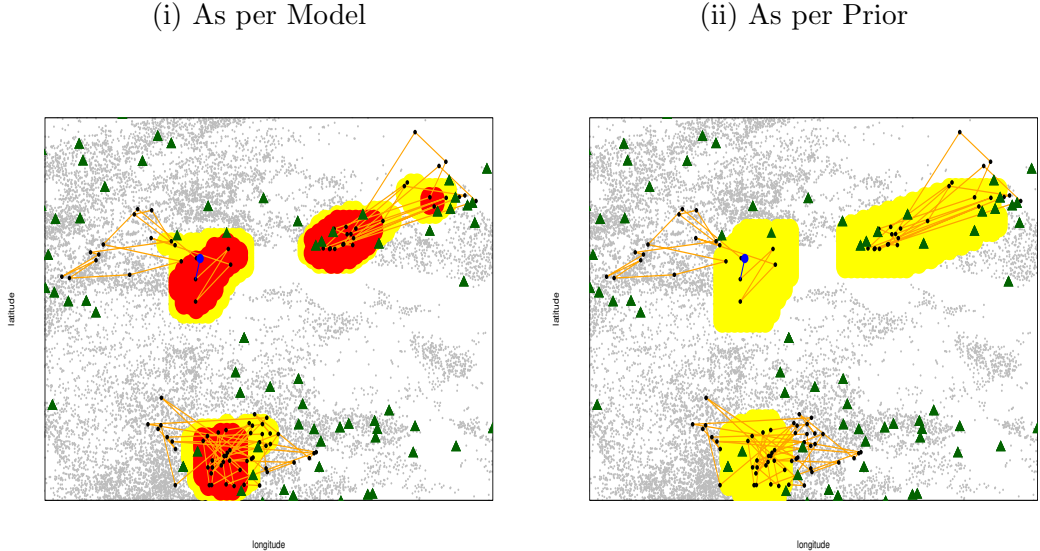

\center
\caption{\label{F:pred_mapgangA} Top likely locations predicted for gangs A (bottom cluster), B (top right) and C (top left) for day 1305 in the data. (i) For each gang, top 500 sq km is marked  red,  500-1000 sq km marked yellow, and the actual location for gang C on the day is shown as a blue dot. Locations for other gangs were unknown (ii) all locations with highest value for prior is marked yellow (for gangs A, B, C this marks 962, 1737,  1306 sq km of area respectively as yellow)  }
\begin{subfigure}{.45\textwidth}
\center
\caption{As per Model }
\includegraphics[width=7cm, height=7cm, angle=0]{./figs/pred_map_96}
\end{subfigure}
\begin{subfigure}{.45\textwidth}
\center
\caption{ As per Prior}
\includegraphics[width=7cm, height=7cm, angle=0]{./figs/prior_map_96}
\end{subfigure}
\end{figure}

\section{Conclusion}
\label{S:conclusion}
We modeled a pressing real-life problem faced by police, which is to predict the movement of insurgent gangs. Specifically, we used real data and expert knowledge on naxalite insurgency in India to develop a coherent methodology to predict movement of insurgent gangs. An important methodological challenge which arises in such problems is that the gang-locations are observed only at irregularly separated time-points.  Formulating and working with a stochastically consistent model specification with such incomplete data is not straight forward. We formulated a weighted kernel density model for which we derived the exact form of likelihood conditional on incomplete past data.  We also provided an approach to construct an expert-prior that accounts for the expert knowledge of police on gang-movements, considering forest density, proximity to CRPF (police) camps, recent observed locations and any additional intelligence available through their informant network. Usually Bayesian analysis requires a prior on the model parameters. Often in practice, as in our context, the prior from expert knowledge is stated in terms of observables (i.e. possible gang-locations) rather than on the model parameters. We provide a systematic approach to incorporate such expert-prior along with a credibility weight, to carry out Bayesian estimation and prediction. Our approach is a first step to constructing such priors, which can be refined with better information gathering mechanisms, post the implementation of the model.

We applied the methods for predicting the movements of three major insurgent gangs that operate in the state of Jharkhand, India. We saw that the model-with-prior usually gave better results than the model-without-prior. We also note that minor changes in the cutoffs of the priors for the forest density or distance to CRPF camps or the credibility weights do not drastically change the results. For predictive assessments, we used one known metric (Required Area  to Monitor) but also suggested a new metric (Area Under Proximity Curve) which provides additional insight. 
 
For practical implementation,  our model can be updated every day based on latest known data and current expert knowledge, and used to highlight likely areas of gang movements on a map for police-use. In many instances, we saw that this can reduce the focus area of gangs' locations from as large as 5500 sq km (say the known area of gang's past operations) to less than 500 sq km.  Although the highlighted areas may appear large, in the insurgency context, this kind of focus is significant from policing operations point of view given that the gangs can move over tens of thousands of square kilometers, and large number of troops being available for anti-insurgency operations.   Combined with the newer aerial techniques for surveillance such as drones, which are now being adopted by the police, the predictions from the model can further help focused patrols and intelligent search.  Aided by the model, the police should develop systems to  observe and collect more frequent data on gang-movements, which in turn can expedite improvements in the model.  

Finally, this methodology may also be potentially  extended to model the activities of other types of organized criminal gangs such as those dealing with narcotics, fake currency or illicit arms because even in these cases, police is able to only tabulate the incidents where either the material is confiscated or some members of the gangs are arrested.

 \FloatBarrier
 \appendix
 \section{Landuse modeling to classify any given location as forest versus non-forest}
 \label{S:landuse_model} 
 Here, we describe the steps involved in developing the model to classify any given location (i.e. longitude, latitude) as forest or non-forest. The exercise has mainly two steps, first to obtain satellite image data from the U.S. Geological Survey \href{https://earthexplorer.usgs.gov}{USGS} website,  and then build a classification model in R (\citealt{cran}) using the `superclass' package. The USGS contains satellite images of a specific dimension recorded every day covering all locations on earth.  Each landsat image is given an identifier by the USGS website, e.g. 141-043, indicating that it is the image of a specific region. There can be several landsat images of the same region taken every day, some may have a lot of cloud cover, some with better clarity. Using our best judgement, we chose the images with good clarity within the same date range as our data. We note that a single landsat image cannot cover the whole state of Jharkhand.  Specifically, as shown in Figure \ref{F:landsat_cover}, we select a polygon area that surrounds the state of Jharkhand on the map, and then select a set of 14 landsat images that together cover the polygon. 
 \begin{figure}[ht]
\center
\caption{\label{F:landsat_cover}  Landsat images that cover the selected polygon area around Jharkhand. Images are labelled 142-043, 142-044, 141-043, 141-044, 141-045, 140-042, 140-043, 140-044, 140-045, 139-042, 139-043, 139-044, 139-045}
\includegraphics[width=10cm, height=5cm]{./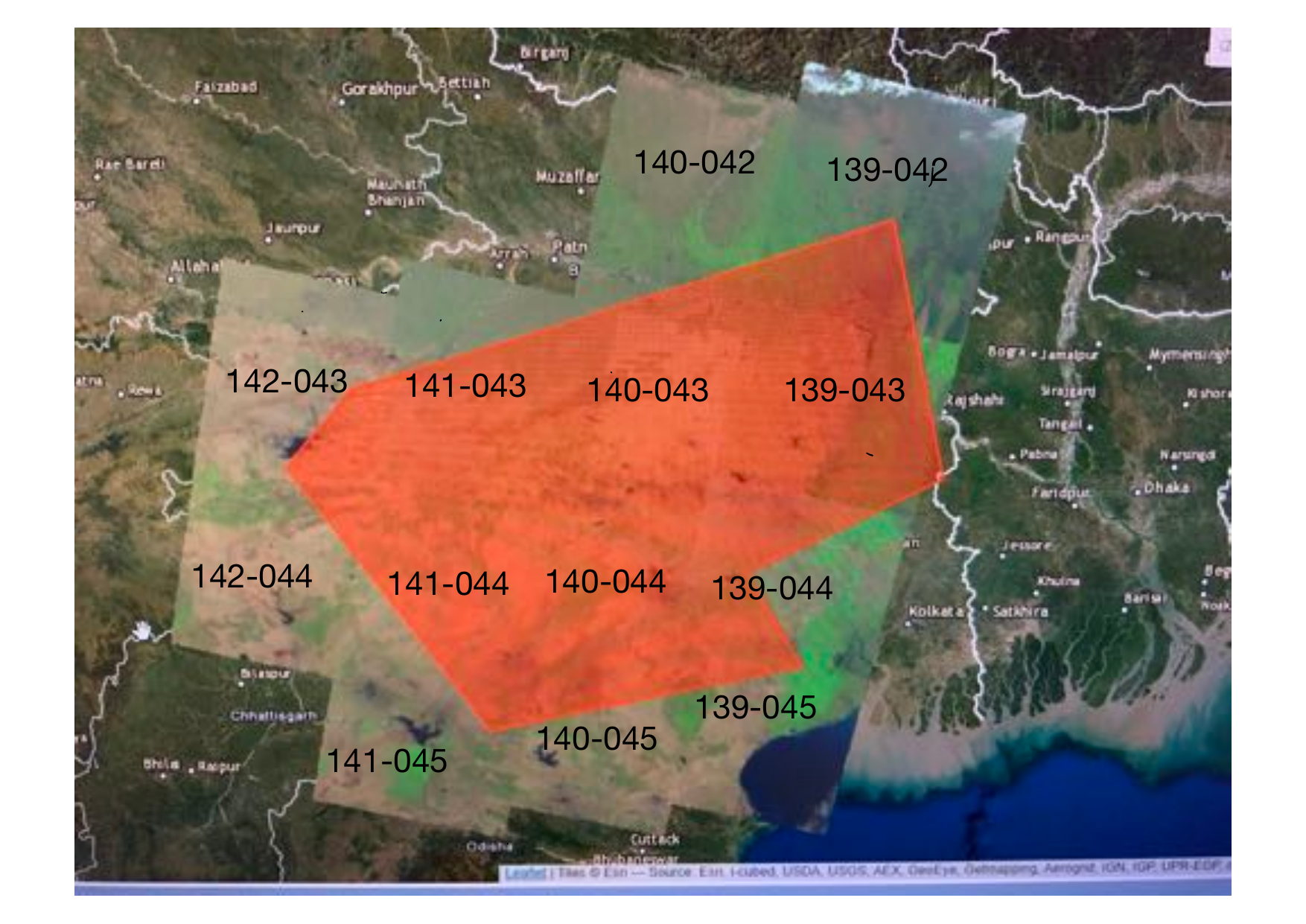}
~\end{figure}

The classification rule for forest versus non-forest that we develop uses the underlying metadata on content of different spectral band levels in the images.  We have used Landsat 8 Operational Land Imager (OLI) and Thermal Infrared Sensor (TIRS) images, which consist of nine spectral bands.  Band 1 represents Coastal/Aerosol, Band 2 indicates Blue, Band 3 represents Green, Band 4 represents Red, Band 5 represents Near InfraRed, Band 6 and Band 7 represents Shortwave Length Infrared, Band 8 represents Panchromatic, Band 9 represents Cirrus, Band 10 and Band 11 represents Thermal Infrared (refer \href{https://www.usgs.gov/faqs/what-are-band-designations-landsat-satellites}{band designations}). This band-level metadata can be combined in R to create interactive images, that can be used to create training data on forest locations and non-forest locations. To help easily identify forest and non-forest, it helps to enhance different spectral combinations in the interactive map. We note that to identify forest areas in the interactive map, we use the infra-red color coding, which can be obtained by using a specific RGB (Reg-Green-Blue) level combination. Since green foliage easily reflects infra red rays, the infra-red color coding can be used to identify thick vegetation and hence forests based whenever the color is dark red.  Similarly, a different RGB combination is used to create an image to enhance non-forest areas to enable the choice of non-forest locations for the purpose of training data creation. Figure \ref{F:choosetr} illustrates the choosing training data points, i.e.pixel locations that  are forest and that are non-forest. 

\begin{figure}
\caption{\label{F:choosetr}Illustration on choosing training data points indicating forest and nonforest on an interactive map. The circles indicate locations chosen as part of the training data.}
\center
\begin{subfigure}{.45\textwidth}
\caption{forest}
\includegraphics[width=10cm, height=5cm]{./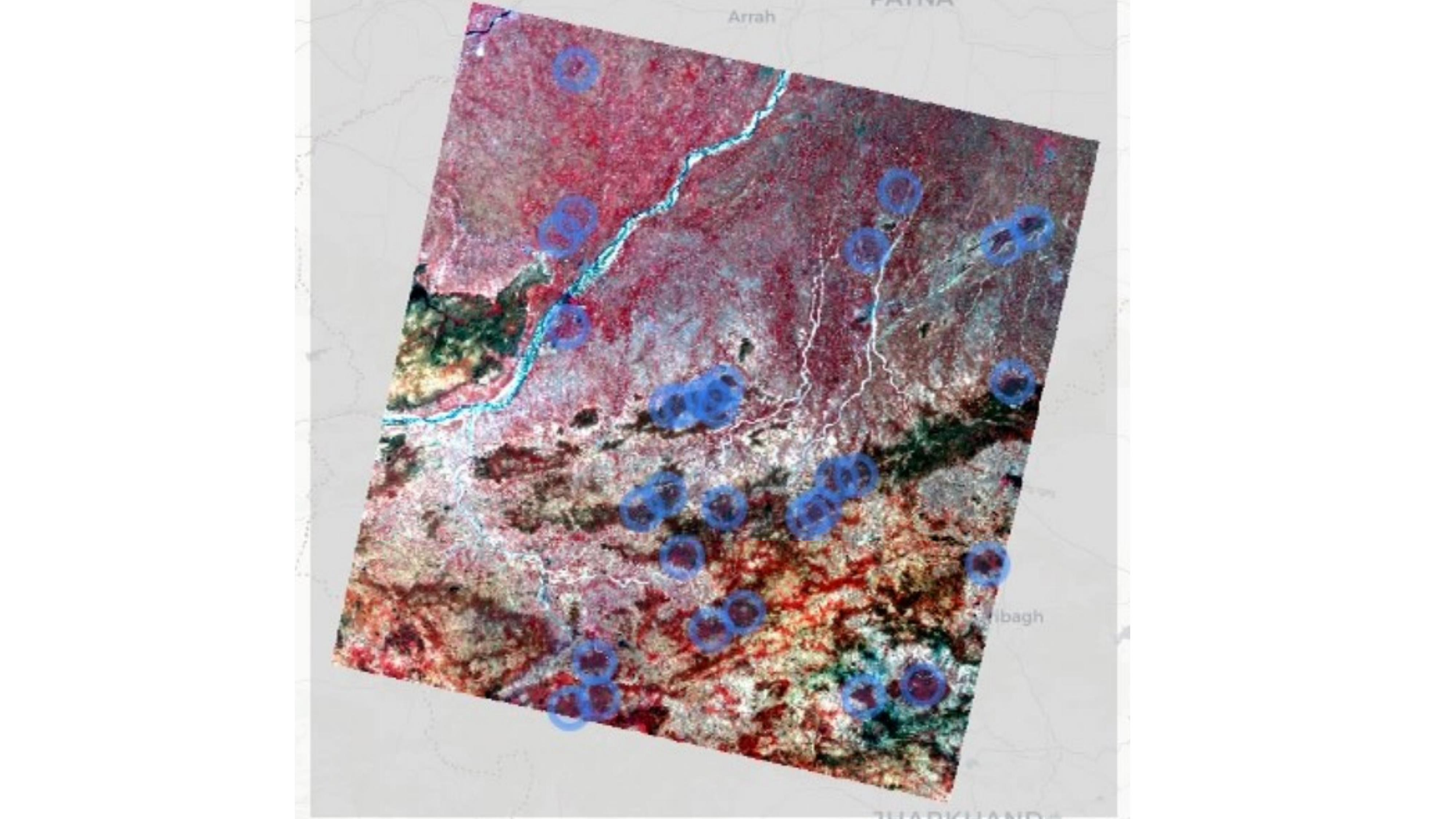}
\end{subfigure}
\begin{subfigure}{.45\textwidth}
\caption{nonforest}
\includegraphics[width=10cm, height=5cm]{./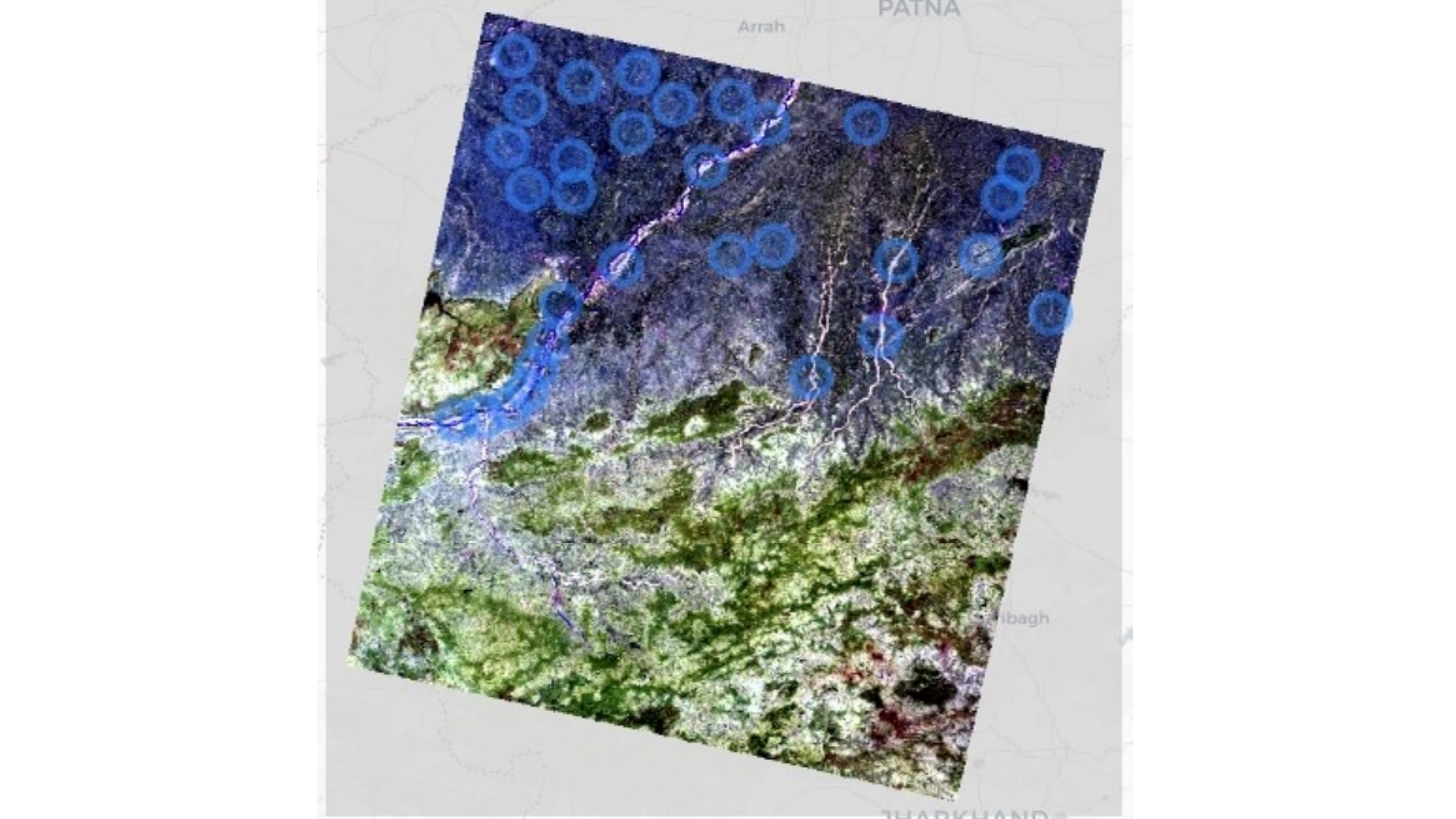}
\end{subfigure}
\end{figure}

 We prepare a training dataset for each of the14 landsat images and develop a classification model to classify different locations within the image as either forest or non-forest, given the  latitude and longitude using the `superclass' package in R (\citealt{cran}). Then, for each point in a fine grid of points covering each landsat image, we apply the classification model to classify the point as forest or non-forest. Figure \ref{F:landuse141043_pred_ggmap} illustrates our  prediction of forest over a fine grid of points  for one of the satellite images, namely 141-043. In  Figure \ref{F:landuse141043_pred_ggmap} part(i), dense green areas are forest areas and in  Figure \ref{F:landuse141043_pred_ggmap} part(ii) black dots indicate locations predicted as forest by the landuse model. The training accuracies of our classification models for the different landsat images covering Jharkhand were between 90 and 95\%. Once a fine grid of points covering the Jharkhand state is assigned a forest or non-forest indicator,  we can compute the forest density at any given location as the percentage of grid points within a 30 km radius that are classified as forest.  The choice of 30 km was fixed based on our understanding from the police given the nature of distances travelled by the criminal gangs. 

 \begin{figure}[ht]
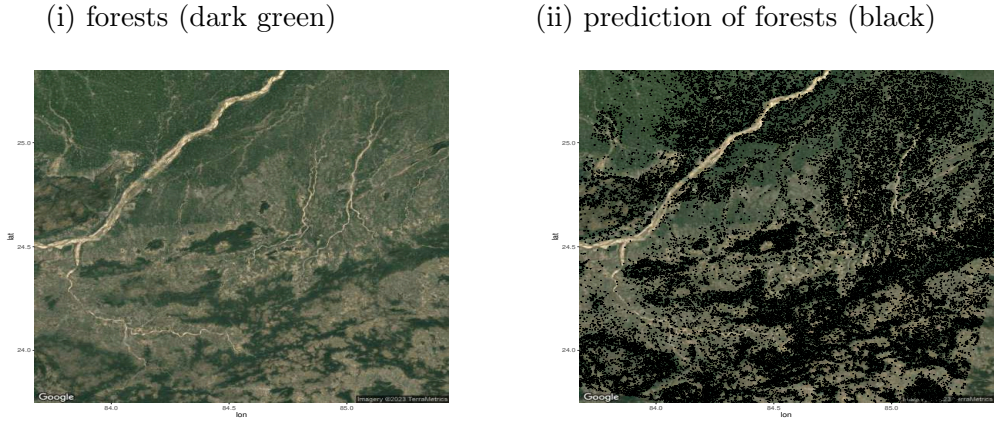

\center
\caption{\label{F:landuse141043_pred_ggmap} Forest prediction on  satellite image 141-043}
\begin{subfigure}{.45\textwidth}
\center
\caption{forests (dark green)}
\includegraphics[width=8cm, height=5cm, angle=0]{./figs/forest_ggmap_141043}
~\end{subfigure}
\begin{subfigure}{.45\textwidth}
\center
\caption{prediction of forests (black)}
\includegraphics[width=8cm, height=5cm, angle=0]{./figs/forest_ggmap_141043_pred}
~\end{subfigure}
\end{figure}

\FloatBarrier
 \section{Proof of Proposition \ref{thm:result1}}
 \label{S:app_proof}
  \begin{proof}[Proof of Proposition \ref{thm:result1}]~\\
We will use mathematical induction on the number of missing past data points ($L$) to prove the result. We first note the following identity. Let $\underline{\tau_1}=\tau_1\cdot (\delta_{lon}, \delta_{lat})$ , $\underline{\tau_2}=\tau_2\cdot (\delta_{lon}, \delta_{lat})$, and $\underline{\tau}=\sqrt{\tau_1^2+\tau^2_2}\cdot (\delta_{lon}, \delta_{lat})$. Then, for $\kappa_2(\cdot, \cdot)$ as in (\ref{eq:bivnormal}),
\begin{eqnarray}
&&\int \kappa_2\left(s-z, \underline{\tau_1}\right)\cdot \kappa_2\left(z-s_i, \underline{\tau_2}\right) dz =  \kappa_2\left(s-s_i, \underline{\tau}\right) .\label{eq:lem1}
\end{eqnarray}
To show equation (\ref{eq:result1}) for $L=1$,  note that using (\ref{eq:lem1}) we have
\begin{eqnarray}
&& f_{(\theta,h)}\left(s_{n+1}=s\vert s_{(1:n)\backslash u_{1}}\right) = \int f_{(\theta,h)}\left(s\vert s_{(1:n)}\right)\cdot f_{(\theta,h)}\left(s_{u_1}\vert s_{(1:u_1-1)}\right)  ds_{u_1}\nonumber \\
 &=& \begin{split} \int \left( \sum_{i\in(1:n)\backslash \{u_1\}} w_\theta(i,n+1) \kappa_2(s-s_i,\underline{h}) + w_\theta(u_1,n+1) \kappa_2(s-s_{u_1},\underline{h}) \right) \\ 
 \cdot   \sum_{j\in(1:u_1-1)} w_\theta(j,u_1) \kappa_2(s_{u_1}-s_i,\underline{h}) ds_{u_1} \end{split}\\
&=&  \sum_{i\in(1:n)\backslash \mathcal{S}_2} w_\theta(i,n+1) \kappa_2(s-s_i,\underline{h}) \nonumber\\
&&+~~~  \sum_{j\in((1:u_1-1)\backslash \mathcal{S}_{1}}  w_\theta(j,u_1)\cdot w_\theta(u_1,n+1) \kappa_2(s-s_{i},\sqrt{2}\underline{h}).
 \end{eqnarray}
 Suppose now that  the following equation holds for some $L<=J$.
 \begin{eqnarray}
 &&~\begin{split}
& f_{(\theta,h)}\left(s_{n+1}=s\vert s_{(1:n)\backslash u_{(1:J)}}\right) = \sum_{i\in(1:n)\backslash\mathcal{S}_{J+1}} w_\theta(i,n+1) \kappa_2(s-s_i,\underline{h}) \\
&+\sum_{m=1}^J \left( \sum_{1\leq i_1 < i_2\cdots<i_m\leq J} w_\theta(u_{i_1},u_{i_2}) \cdot w_\theta(u_{i_2},u_{i_3}) \cdots \right. \\
& \cdots w_\theta(u_{i_{m-1}},u_{i_m}) \cdot w_\theta(u_{i_m},n+1) \cdot \left.  \sum_{j\in(1:u_{i_1}-1)\backslash\mathcal{S}_{i_1}} w_\theta(j,u_{i_1}) \kappa_2(s-s_j,\sqrt{m+1} \cdot \underline{h})\right), 
 \end{split}\nonumber\\
 &&~ \label{eq:result1_J}
 \end{eqnarray}
To obtain the statement for $L=J+1$,  we note that only the first summation term  in (\ref{eq:result1_J}) depends on $s_{u_{J+1}}$, and also that $\mathcal{S}_{J+2}=\mathcal{S}_{J+1} \cup \{u_{J+1}\} $. So, we can write 
 \begin{eqnarray}
 &&f_{(\theta,h)}\left(s_{n+1}=s\vert s_{(1:n)\backslash u_{(1:J+1)}}\right) \nonumber\\
  &=& \int f_{(\theta,h)}\left(s\vert s_{(1:n)\backslash(u_{(1:J)})}\right)  f_{(\theta,h)}\left(s_{u_{J+1}}\vert s_{(1:u_{J+1}-1)\backslash(u_{(1:J)})}\right) ds_{u_{J+1}} \nonumber\\
  &=& ~\sum_{i\in(1:n)\backslash\mathcal{S}_{J+2}} w_\theta(i,n+1) \kappa_2(s-s_i,\underline{h}) \nonumber\\
  && + \int w_\theta(u_{J+1},n+1) \kappa_2(s-s_{J+1},\underline{h})\cdot  f_{(\theta,h)}\left(s_{u_{J+1}}\vert s_{(1:u_{J+1}-1)\backslash(u_{(1:J)})}\right) ds_{u_{J+1}}\nonumber\\
  &&~\begin{split}
&+\sum_{m=1}^J \left( \sum_{1\leq i_1 < i_2\cdots<i_m\leq J} w_\theta(u_{i_1},u_{i_2}) \cdot w_\theta(u_{i_2},u_{i_3}) \cdots w_\theta(u_{i_{m-1}},u_{i_m})\right. \\
& \cdot w_\theta(u_{i_m},n+1) \cdot \left.  \sum_{j\in(1:u_{i_1}-1)\backslash\mathcal{S}_{i_1}} w_\theta(j,u_{i_1}) \kappa_2(s-s_j,\sqrt{m+1} \cdot \underline{h})\right). 
 \end{split}\nonumber\\
 && ~~\label{eq:step2}\hspace{10cm}
  \end{eqnarray}
 Now, note that the second (integration) term in the right hand side of equation (\ref{eq:step2}) can be written as
 \begin{eqnarray}
&& \int w_\theta(u_{J+1},n+1) \kappa_2(s-s_{J+1},\underline{h})\cdot  f_{(\theta,h)}\left(s_{u_{J+1}}\vert s_{(1:(u_{J+1}-1))\backslash(u_{(1:J)})}\right) ds_{u_{J+1}}\nonumber\\
 &=&\int w_\theta(u_{J+1},n+1) \kappa_2(s-s_{u_{J+1}},\underline{h})\cdot \sum_{i\in(1:u_{J+1}-1)\backslash\mathcal{S}_{J+1}} w_\theta(i,u_{J+1}) \kappa_2(s_{u_{J+1}}-s_i,\underline{h}) ds_{u_{J+1}}\nonumber\\
 &&~\begin{split}
&+\int w_\theta(u_{J+1},n+1) \kappa_2(s-s_{u_{J+1}},\underline{h})\cdot \sum_{m=1}^J \left( \sum_{1\leq i_1 < i_2\cdots<i_m\leq J} w_\theta(u_{i_1},u_{i_2}) \cdot w_\theta(u_{i_2},u_{i_3}) \cdots \right. \\
& \cdot w_\theta(u_{i_m},u_{J+1}) \cdot  \left.  \sum_{j\in(1:u_{i_1}-1)\backslash\mathcal{S}_{i_1}} w_\theta(j,u_{i_1}) \kappa_2(s_{u_{J+1}}-s_j,\sqrt{m+1} \cdot \underline{h})\right) ds_{u_{J+1}}
 \end{split}~\nonumber\\
 &=& \sum_{i\in(1:u_{J+1}-1)\backslash\mathcal{S}_{J+1}} w_\theta(i,u_{J+1})\cdot w_\theta(u_{J+1},n+1) \cdot  \kappa_2(s-s_i, \sqrt{2}\cdot \underline{h})\nonumber\\
 &&~\begin{split}
&+\sum_{m=1}^J \left(\sum_{1\leq i_1 < i_2\cdots<i_m\leq J}  w_\theta(u_{i_1},u_{i_2}) \cdot w_\theta(u_{i_2},u_{i_3}) \cdots w_\theta(u_{i_{m-1}},u_{i_m}) \cdot w_\theta(u_{i_m},u_{J+1}) \right. \\
& ~~~~~~~~~~~ \cdot  w_\theta(u_{J+1},n+1) \left.   \sum_{j\in(1:u_{i_1}-1)\backslash\mathcal{S}_{i_1}}   w_\theta(j,u_{i_1})\cdot \kappa_2(s-s_j,\sqrt{m+2} \cdot \underline{h})\right).
 \end{split}~\nonumber\\
 && ~\label{eq:step3}
\end{eqnarray}
So, using (\ref{eq:step3}), equation (\ref{eq:step2}) can be written as
 \begin{eqnarray}
 &&f_{(\theta,h)}\left(s_{n+1}=s\vert s_{(1:n)\backslash(u_{(1:J+1)})}\right) \nonumber\\
  &&= ~\sum_{i\in(1:n)\backslash\mathcal{S}_{J+2}} w_\theta(i,n+1) \kappa_2(s-s_i,\underline{h}) \nonumber\\
  && +  \sum_{i\in(1:u_{J+1}-1)\backslash\mathcal{S}_{J+1}} w_\theta(i,u_{J+1})\cdot w_\theta(u_{J+1},n+1) \cdot  \kappa_2(s-s_i, \sqrt{2}\cdot \underline{h})\nonumber\\
 &&~\begin{split}
&+\sum_{m=1}^J \left(\sum_{1\leq i_1 < i_2\cdots<i_m\leq J}  w_\theta(u_{i_1},u_{i_2}) \cdot w_\theta(u_{i_2},u_{i_3}) \cdots w_\theta(u_{i_{m-1}},u_{i_m}) \right. \\
& \cdot w_\theta(u_{i_m},u_{J+1}) \cdot  w_\theta(u_{J+1},n+1) \left.   \sum_{j\in(1:u_{i_1}-1)\backslash\mathcal{S}_{i_1}}   w_\theta(j,u_{i_1})\cdot \kappa_2(s-s_j,\sqrt{m+2} \cdot \underline{h})\right)
 \end{split}~\nonumber\\
    &&~\begin{split}
&+\sum_{m=1}^J \left( \sum_{1\leq i_1 < i_2\cdots<i_m\leq J} w_\theta(u_{i_1},u_{i_2}) \cdot w_\theta(u_{i_2},u_{i_3}) \cdots \right. \\ 
&\cdots w_\theta(u_{i_{m-1}},u_{i_m}) \cdot w_\theta(u_{i_m},n+1) \cdot \left.  \sum_{j\in(1:u_{i_1}-1)\backslash\mathcal{S}_{i_1}} w_\theta(j,u_{i_1}) \kappa_2(s-s_j,\sqrt{m+1} \cdot \underline{h})\right).
 \end{split}~\nonumber\\
 &&~\label{eq:step4}
  \end{eqnarray}
  Note that the second and third summation term on the right hand side of (\ref{eq:step4}) can be combined and written as 
  \begin{eqnarray}
 &&~\begin{split}
&\sum_{m=1}^{J+1}\left(\sum_{1\leq i_1 < i_2\cdots<i_m\leq J+1,  i_m=J+1}  w_\theta(u_{i_1},u_{i_2}) \cdot w_\theta(u_{i_2},u_{i_3}) \cdots w_\theta(u_{i_{m-1}},u_{i_m}) \right. \\
& \cdot w_\theta(u_{i_m},n+1) \left.   \sum_{j\in(1:u_{i_1}-1)\backslash\mathcal{S}_{i_1}}   w_\theta(j,u_{i_1})\cdot \kappa_2(s-s_j,\sqrt{m+1} \cdot \underline{h})\right).
 \end{split}~\label{eq:step5}
 \end{eqnarray}
and the last summation term  on the right hand side of (\ref{eq:step4}) can be written as 
 \begin{eqnarray}
   &&~\begin{split}
&\sum_{m=1}^J \left( \sum_{1\leq i_1 < i_2\cdots<i_m\leq (J+1), i_m\ne J+1} w_\theta(u_{i_1},u_{i_2}) \cdot w_\theta(u_{i_2},u_{i_3}) \cdots \right. \\
& \cdots w_\theta(u_{i_{m-1}},u_{i_m}) \cdot w_\theta(u_{i_m},n+1) \cdot \left.  \sum_{j\in(1:u_{i_1}-1)\backslash\mathcal{S}_{i_1}} w_\theta(j,u_{i_1}) \kappa_2(s-s_j,\sqrt{m+1} \cdot \underline{h})\right).
 \end{split}~~\label{eq:step6}
    \end{eqnarray}
 Using equations (\ref{eq:step5}) and  (\ref{eq:step6}), we can write equation (\ref{eq:step4}) as
 \begin{eqnarray}
 &&f_{(\theta,h)}\left(s_{n+1}=s\vert s_{(1:n)\backslash(u_{(1:J+1)})}\right) \nonumber\\
 &=& ~\sum_{i\in(1:n)\backslash\mathcal{S}_{J+2}} w_\theta(i,n+1) \kappa_2(s-s_i,\underline{h}) \nonumber\\
  &&~\begin{split}
&+\sum_{m=1}^{J+1}\left(\sum_{1\leq i_1 < i_2\cdots<i_m\leq J+1}  w_\theta(u_{i_1},u_{i_2}) \cdot w_\theta(u_{i_2},u_{i_3}) \cdots w_\theta(u_{i_{m-1}},u_{i_m}) \right. \\
& \cdot w_\theta(u_{i_m},n+1) \left.   \sum_{j\in(1:u_{i_1}-1)\backslash\mathcal{S}_{i_1}}   w_\theta(j,u_{i_1})\cdot \kappa_2(s-s_j,\sqrt{m+1} \cdot \underline{h})\right),
 \end{split}\nonumber\\
 &&~\label{eq:step7}
  \end{eqnarray}
  thus proving the statement (\ref{eq:result1}) for L=J+1.
 \end{proof}

\section{Comparison of proposed model with partial-model}
\label{S:simulation}
In Remark 1 of the paper, we noted that to model incomplete temporal data,  a computationally simpler weighted kernel density as shown in equation  equation (\ref{eq:PR}) could be considered, which is same as using only the first summation term in our likelihood (\ref{eq:result2}) after renormalisation. Let use refer to this as the partial-model as against our proposed model (\ref{eq:result1}) which involves many other terms. Here, we first carry out a simulation to illustrate that the partial could be inadequate to estimate parameters when in the case of incomplete temporal data. Also, in our specific case study we find that the proposed model could be considered better than using the partial model.

For the simulation exercise, data was simulated using the full model (\ref{eq:fullmod}), with $n=500$, $\theta=4$ and $h=1$. So, the full data is $s_1, s_2, \ldots, s_{500}$. Then, $40\%$ of the time-points were randomly chosen to create simulated incomplete temporal data. The parameters were estimated sequentially on the incomplete temporal data using the proposed likelihood based on incomplete temporal data, i.e. equation (\ref{eq:result1}), and also based on partial likelihood as shown in remark 1 equation (\ref{eq:PR})of the paper. Figure \ref{F:figsim} shows the 95\% credible intervals based on each of these models, i.e. the full likelihood and partial likelihood,  obtained for the parameters $h$ and $\theta$. We can see from the figure that the credible intervals from the partial likelihood model struggles to capture the true value of the parameters (shown as dotted line). On the other hand, the credible intervals based on  likelihood (\ref{eq:result1}), which accounts for incomplete temporal data, do capture the true parameter value and narrow down around the true value with increasing values of $n$.
 \begin{figure}[ht]
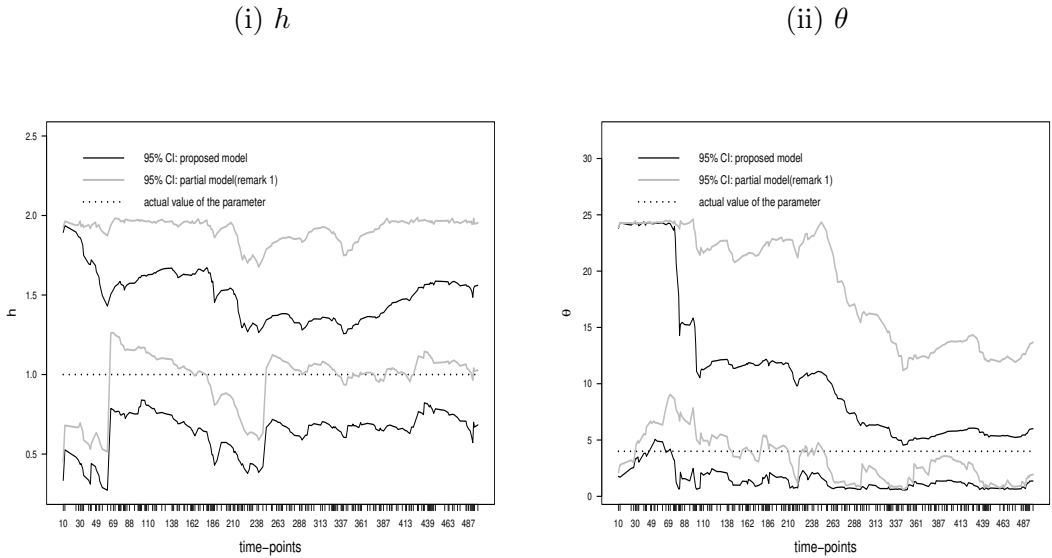

\center
\caption{\label{F:figsim}  95\% credible intervals (CI) obtained using simulated data based on our proposed model and the (partial) model shown in Remark 1 (\ref{eq:PR}).}
\begin{subfigure}{.45\textwidth}
\caption{$h$}
\includegraphics[width=7cm, height=7cm,angle=0]{./figs/sim_h}
\end{subfigure}
~\begin{subfigure}{.45\textwidth}
\caption{$\theta$}
\includegraphics[width=7cm, height=7cm,angle=0]{./figs/sim_theta}
\end{subfigure}
\end{figure} 

Further, in our data application of predicting movements of naxalite gangs, Figure \ref{F:model_vs_FTonly} compares the performance of our proposed model with the partial-model for gang A. Specifically, for every instance $k$ in the data until the 41st instance, we compute the percentage of instances among $k$th to the 51st instance (i.e. the most recent instance) when our proposed model performed better than the partial-model, (i) in terms of RAM and (ii) in terms of AUPC. There are 51 instances for gang A, but we consider until 41 instances so that we at least have 10 recent instances to compare the two models. While the propsed model does not seem to be performing better initially, if we consider more recent days, the proposed model turns out to be performing better more frequently than the partial-model.
 \begin{figure}[ht]
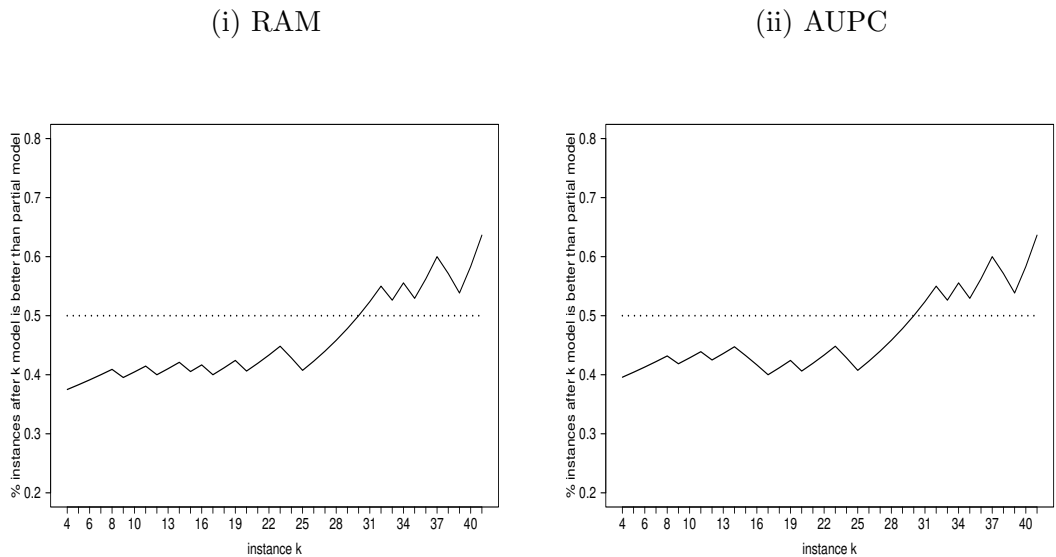

\center
\caption{\label{F:model_vs_FTonly} Comparison of performance of our proposed model with the partial-model for gang A. For every instance $k$ in the data until the 41st instance, the y-axis shows the percentage of instances among $k$th to the 51st instance (i.e. the most recent instance) when our proposed model performed better than the partial-model, (i) in terms of RAM and (ii) in terms of AUPC  .}
\begin{subfigure}{.45\textwidth}
\caption{RAM}
\includegraphics[width=7cm, height=7cm,angle=0]{./figs/Model_vs_FTonly_RAM_gangA}
\end{subfigure}
~\begin{subfigure}{.45\textwidth}
\caption{AUPC}
\includegraphics[width=7cm, height=7cm,angle=0]{./figs/Model_vs_FTonly_AUPC_gangA}
\end{subfigure}
\end{figure}

\FloatBarrier
\bibliographystyle{rss}
\bibliography{Ref_BDS}
\end{document}